\documentclass[12pt,onecolumn]{IEEEtran}
\usepackage[margin=1in]{geometry} % guarantees ~6.5" line length
\usepackage{setspace}
\usepackage[T1]{fontenc}

\usepackage{newtxtext,newtxmath}
\usepackage{amsmath,mathtools,bm,dsfont}
\usepackage{graphicx}
\usepackage{booktabs}
\usepackage[caption=false,font=footnotesize]{subfig} % use subfig, not subcaption
\usepackage[linesnumbered,ruled,vlined]{algorithm2e}
\usepackage{float}
\usepackage{cite}
\usepackage{xcolor}

\usepackage{soul}   % defines \hl
\sethlcolor{yellow} % optional
\usepackage[hidelinks]{hyperref}
\usepackage{amsthm}
\usepackage[switch]{lineno}
\newtheorem{theorem}{Theorem}[section]

\theoremstyle{definition}

\theoremstyle{remark}

 \title {
Two-Phase Cell Switching in 6G vHetNets: Sleeping-Cell Load Estimation and Renewable-Aware Switching Toward NES}

\begin{document}

 \author{\rm{Maryam Salamatmoghadasi}, \rm{Metin Ozturk}, \rm{Halim Yanikomeroglu}
 \thanks{This research has been sponsored in part by the NSERC Create program entitled TrustCAV and in part by The Scientific and Technological Research Council of Türkiye (TUBITAK).
 
Maryam Salamatmoghadasi and Halim Yanikomeroglu are with Non-Terrestrial Networks Lab, Department of Systems and Computer Engineering, Carleton University, Ottawa, ON K1S5B6, Canada. Metin Ozturk is with Electrical and Electronics Engineering, Ankara Yildirim Beyazit University, Ankara, 06010, Turkiye. emails: \texttt{maryamsalamatmoghad@cmail.carleton.ca, metin.ozturk@aybu.edu.tr, halim@sce.carleton.ca}.}}

\maketitle
%\linenumbers
\begin{abstract}
This paper proposes a two-phase framework to improve the sustainability in vertical heterogeneous networks (vHetNets) that integrate various types of base stations~(BSs), including terrestrial macro BSs~(MBSs), small BSs~(SBSs), and a high-altitude platform station–super MBS (HAPS-SMBS). 
In Phase I, we address the critical and often-overlooked challenge of estimating the traffic load of sleeping SBSs, a prerequisite for practical cell switching, by introducing three methods with varying data dependencies: (i) a distance-based estimator (no historical data), (ii) a multi-level clustering (MLC) estimator (limited historical data), and (iii) a long short-term memory~(LSTM)-based temporal predictor (full historical data). 
In Phase II, we incorporate the most accurate estimation results from Phase I into a renewable energy–aware cell switching strategy, explicitly modeling solar-powered SBSs in three operational scenarios that reflect realistic hybrid grid–renewable deployments. 
This flexible design allows the framework to adapt switching strategies based on renewable availability and storage conditions, making it more practical and robust for real-world networks. Using a real call detail record dataset from Milan, simulation results show that the LSTM method achieves a mean absolute percentage error (MAPE) below 1\% in Phase I, while in Phase II, the threshold-based solar integration scenario achieves up to 23\% network energy saving (NES) relative to conventional cell switching. Overall, the proposed framework bridges the gap between theoretical cell switching models and practical, sustainable 6G radio access network~(RAN) operation, enabling significant energy saving without compromising quality of service.

\end{abstract}
\begin{IEEEkeywords}
 HAPS, vHetNet, traffic load estimation, cell switching, power consumption, sustainability, 6G, NES
\end{IEEEkeywords}
\section{Introduction}
%\subsection{Background}
The sixth-generation (6G) cellular networks are envisioned to support hyperconnectivity with up to 10 million devices per square kilometer, which is rapidly approaching reality~\cite{bs_power}. 
While this evolution promises ubiquitous connectivity and advanced services, it also raises serious sustainability concerns, particularly in terms of escalating energy consumption across radio access networks (RANs). Base stations (BSs), especially dense deployments of small BSs (SBSs), are responsible for approximately 60–80\% of a network's total power use~\cite{bs_power}. If not managed effectively, the energy demand of 6G networks could exceed that of fifth-generation (5G) by more than an order of magnitude~\cite{2222222}, posing a threat to both environmental and economic sustainability.

Addressing this challenge is pivotal for meeting sustainability targets. Reports from the International Telecommunication Union (ITU) and the European Commission note that future information and communication technology (ICT) systems must support global connectivity while reducing carbon intensity and absolute energy use~\cite{itu_vision_june_23}. Achieving net-zero trajectories for mobile operators by 2040 requires innovations in energy-efficient RAN design and operation~\cite{Belkhir2018}. 
In parallel, Network Energy Saving (NES) has been formalized in the 3rd Generation Partnership
Project (3GPP) framework as a cross-RAN objective, standardizing energy-control functions such as BS sleep states, power scaling, cell/beam muting, and on-demand broadcast signaling~\cite{3GPP}.

Among energy-saving approaches, dynamically switching off idle or underutilized BSs\textemdash\allowbreak commonly referred to as \textit{cell switching} or \textit{BS sleep mode}—has emerged as a key strategy~\cite{6512843}. By deactivating SBSs during low traffic periods and redistributing the load across active BSs, networks can reduce unnecessary energy usage without significantly compromising the user quality of service (QoS). However, a fundamental bottleneck is the lack of accurate traffic load information for sleeping SBSs. Since deactivated BSs are disconnected from the network scheduler, their potential traffic demand, a critical information for the optimization process, is inherently unknown. Most studies in the literature assume perfect knowledge of traffic loads—even for SBSs in sleep mode—which is unrealistic in operational environments~\cite{ELAA2022JR, EOMK2017JR, Metin_VFA_CellSwitch, 8735834, ACM}. This disconnect limits the real-world applicability of even the most advanced switching algorithms and optimization frameworks.
Bridging this gap requires accurate \textit{estimation} of traffic loads at sleeping SBSs for the next time slot, a problem that is often overlooked but critically important. Without reliable estimates, decisions regarding which SBSs to activate/deactivate may lead to performance degradation, including congestion, suboptimal energy savings, or degraded QoS. Thus, the load estimation problem is a significant obstacle between theoretical frameworks and the practical deployment of cell switching policies in real-world networks.

In parallel, integrating renewable energy—particularly solar—into RAN infrastructure offers an additional path to sustainability. Powering SBSs with on-site photovoltaics and storage can reduce grid dependence and emissions. However, renewable energy sources are intermittent and unpredictable, motivating energy-aware optimization under uncertain supply~\cite{ALOTAIBI2025104213}. Deactivating a solar-powered SBS with sufficient stored energy may result in increased grid energy usage, counteracting energy-saving goals. Therefore, effective energy management must jointly consider traffic demand and energy availability.

Together, these two challenges—estimating the loads of sleeping SBSs and making renewable-aware switching decisions—are central to realizing energy-efficient and sustainable 6G mobile networks. Therefore, we tackle both challenges with a two-phase approach that estimates sleeping-cell loads and enables renewable-aware cell switching.

\subsection{Related Work}

Extensive research has targeted RAN energy reduction via cell switching in heterogeneous deployments~\cite{ELAA2022JR, EOMK2017JR, Metin_VFA_CellSwitch, 8735834, ACM, 8024181, 10061602}. 
In~\cite{ELAA2022JR}, a decentralized control mechanism was proposed to dynamically adjust BS sleep depth based on activity levels, enabling scalable cell switching implementations in dense deployments. 
Similarly, the authors in~\cite{EOMK2017JR} explored energy savings through a control-data separation architecture (CDSA), optimizing user association and BS on/off decisions using genetic algorithms.
More recently, reinforcement learning~(RL) was applied in~\cite{Metin_VFA_CellSwitch}, where a value function approximation (VFA)-based RL method achieved significant energy savings in ultra-dense heterogeneous networks (HetNets).
In~\cite{9528008}, the authors investigated a BS switching-off strategy based on traffic prediction, considering environmental variability across heterogeneous BS types. 
Other works include traffic-aware partial activation~\cite{8735834}, strategy families that reduce operational energy while preserving coverage~\cite{ACM}, and power-adjustable cell switching where SBSs are selectively switched off while macro BSs (MBSs) maintain control plane coverage~\cite{8024181}.
A beamforming-aware cell switching strategy for multi-input single-output networks was proposed in~\cite{10061602}, which cooperatively utilizes beamforming and roaming cost information to increase energy efficiency.

However, a common limitation across these studies is the unrealistic assumption that the traffic loads of all SBSs, including those in sleep mode, are perfectly known at each time slot.  
Some studies have used traffic prediction for BS operation and energy management. 
For instance, a traffic-prediction-based BS switching strategy was examined in~\cite{9528008} that accounts for environmental variability across heterogeneous BS types to minimize total power consumption, yet it still assumes perfect load knowledge for all SBSs, including those in sleep mode. Several prior works have applied traffic prediction methods using statistical modeling and deep learning, such as~\cite{10506912}, but these approaches are limited to active BSs and fail to handle the unique constraints posed by sleeping SBSs. Broader overviews of machine learning--based traffic prediction methods are available in~\cite{eswa}, offering valuable context but still not tailored for sleeping SBS scenarios. A clustering-driven method for traffic load prediction that groups BSs and applies deep recurrent neural networks~(RNNs) shows improved accuracy but remains focused on active load forecasting~\cite{jsan9040053}. Overall, a critical gap persists: sleeping-SBS load-estimation errors are not accounted for in cell-switching optimization.

Parallel to cell switching literature, a growing body of work has focused on integrating renewable energy sources into wireless networks to promote sustainability in 6G systems. Specifically, leveraging solar-powered SBSs has the potential to significantly reduce grid energy consumption and carbon emissions. Several works explored energy harvesting and storage-aware management of BSs, proposing adaptive resource allocation strategies based on solar availability. For instance,~\cite{8502042} modeled a joint planning of solar installations and dynamic operation of BSs (sleep-mode scheduling), demonstrating the importance of co-optimizing renewable deployment and dynamic switching in cellular networks. In~\cite{6874568}, the authors proposed an innovative framework for energy cooperation between renewable-powered BSs, where BSs with excess harvested energy can share it with others facing energy shortages.
While these efforts address renewable-aware resource management at the BS level, most treat cell switching and renewable dynamics as loosely coupled problems rather than a unified decision process.

Non-terrestrial elements, notably high-altitude platform stations (HAPS) as a super MBS (SMBS), have been advocated to complement terrestrial infrastructure and improve energy sustainability by enabling vertical offloading and wide area coverage~\cite{9773096,10938203}. 
Yet prior non-terrestrial networks (NTN) studies do not couple per-slot cell switching decisions with explicit estimation of sleeping-SBS loads and renewable-aware switching rules tailored to solar-capable SBSs.

To the best of our knowledge, the current literature lacks a comprehensive framework that jointly considers the dual challenges of traffic load estimation for sleeping SBSs and intelligent cell switching with renewable energy-aware decision-making. Bridging this gap is essential for enabling realistic, sustainable energy management strategies in 6G vertical HetNets (vHetNets), where multiple tiers of terrestrial and aerial BSs coexist with diverse energy capabilities. This paper fills this critical gap by proposing a two-phase framework that first estimates the load of sleeping SBSs and then optimizes cell switching decisions with explicit consideration of solar energy availability across SBSs.

\subsection{Contributions}

We propose a two-phase framework for energy-efficient cell switching in future sustainable vHetNets, comprising MBSs, SBSs, and HAPS-SMBS. Phase~I estimates the loads of \emph{sleeping} SBSs, enabling a feasible and practical implementation of cell switching operations; Phase~II performs \emph{renewable-aware} switching using the most accurate Phase~I estimates. 
The main contributions of this work are summarized as follows:

\begin{itemize}
    \item \textbf{Problem formulation under unobservability:} We consider cell switching when loads of sleeping SBSs are unobserved at the decision horizon and derive a corresponding analytical model for expected power consumption, capturing the impact of estimation errors. We prove that misestimation can flip the optimal ON/OFF decisions and increase total power, quantifying the risk pathways that undermine energy savings or degrade user QoS.

    \item \textbf{Phase I-Load estimation portfolio and analysis:} We develop three estimators spanning different levels of data availability: (i) a distance-based spatial estimator (no historical data), (ii) a multi-level clustering (MLC) estimator using repeated $k$-means with elbow-based cluster selection (limited historical data), and (iii) a long short-term memory (LSTM)-based temporal estimator (full historical data). This portfolio offers strategic flexibility: depending on the richness of available data, the framework can dynamically switch to the most suitable estimator. Such adaptability ensures practical applicability across diverse deployment scenarios, from data-scarce environments to data-rich networks, thereby making the framework more powerful and resilient.

\item \textbf{Phase II-Renewable-aware cell switching with solar-capable SBSs:} Using the most accurate Phase~I estimates, we optimize SBS ON/OFF decisions while explicitly modeling photovoltaic harvesting and storage. We design three operational scenarios (full inclusion, exclusion, and threshold-based inclusion) that reflect realistic hybrid grid–renewable policies. This multi-scenario design provides strategic flexibility, enabling the framework to adapt its switching strategy to varying levels of solar penetration and storage availability, thereby reducing the effective search space while maintaining energy efficiency.

   \item \textbf{HAPS-assisted vHetNet context:} We integrate a HAPS-SMBS tier to enable vertical offloading and wide-area coverage, examining interactions between aerial-terrestrial layering, sleeping-SBS estimation, and renewable-aware switching.

    \item \textbf{Quantitative results on real data:} Using a Milan call detail record (CDR) dataset, Phase~I’s LSTM estimator achieves <1\% mean absolute percentage error (MAPE). In Phase~II, the threshold-based solar integration scenario achieves up to $\sim$23\% NES relative to conventional cell switching under comparable conditions.

 \end{itemize}

\section{System Model}\label{sec:model}
\subsection{Network Model}
We consider a vHetNet comprising a total of $b \in \mathbb{N}$ BSs, indexed by $i \in \mathcal{I}$, where $\mathcal{I} = \{1, 2, \ldots, b\}$. Specifically, we study a macro cell (MC) with one MBS and $s \in \mathbb{N}$ SBSs, indexed by $j\in\mathcal{J}=\{1,2,\ldots,s\}$.
Additionally, a HAPS-SMBS is integrated, potentially serving multiple MCs, into the network.
The primary function of SBSs is to deliver data services and address user-specific requirements, while MBS and HAPS-SMBS ensure consistent network coverage and manage control signals.
A key role of HAPS-SMBS is to efficiently manage traffic offloading from SBSs during low-traffic periods, utilizing its extensive line-of-sight~(LoS) and large capacity~\cite{9380673}. 
This capability not only enhances network flexibility and optimizes capacity utilization but also offers a cost-effective alternative to multiple terrestrial BSs, particularly in fluctuating and high-demand scenarios. 

\subsection{Motivation for a HAPS Tier}

Operating in the stratosphere at altitudes of 20–50 km, HAPS nodes are quasi-stationary and typically confined to a defined cylindrical region~\cite{9380673}. According to ITU recommendations, a HAPS can provide a wide footprint of up to 500 km in radius~\cite{itu_2000}, enabling broad coverage and reinforcing its role as an SMBS for vHetNets. Leveraging this high-altitude position, HAPS-SMBS provides predominantly LoS links and can form dynamic spot beams over a large coverage area, reducing blockage and handovers and making the platform effective for traffic offloading and coverage extension. In addition, HAPS-SMBS can support backhauling (RF/FSO) and operate in higher-frequency bands (e.g., mmWave/THz) where wide bandwidths are available.

Fig.~\ref{HAPS} illustrates the capacity headroom using Shannon’s law \(C=m\,B\log_2(1+\gamma)\), where $C$ is the channel capacity, $B$ is the system bandwidth, $\gamma$ is the signal-to-interference-plus-noise ratio (SINR), and $m$ is a scaling factor representing the effective bandwidth allocation relative to the baseline. The dashed curve is a terrestrial baseline with \(m=1\) at the reference SINR; the x-axis shows SINR degradation to reflect the longer HAPS link distance; the factor \(m>1\) represents additional bandwidth that a HAPS-SMBS can allocate~\cite{9380673}. Because capacity scales approximately linearly with \(m\) and \(B\) but only logarithmically with SINR, modest increases in \(m\) keep HAPS throughput at or above the terrestrial baseline even under notable SINR loss. In practice, a HAPS-SMBS can employ larger values of \(B\), further increasing offloading headroom. This is a physics-based illustration to motivate including a HAPS tier in the vHetNet.

%%%%%%%%%%%%%%%%%%%%%%%%%

\begin{figure}[t]
\centerline{\includegraphics[width = 8.cm ]{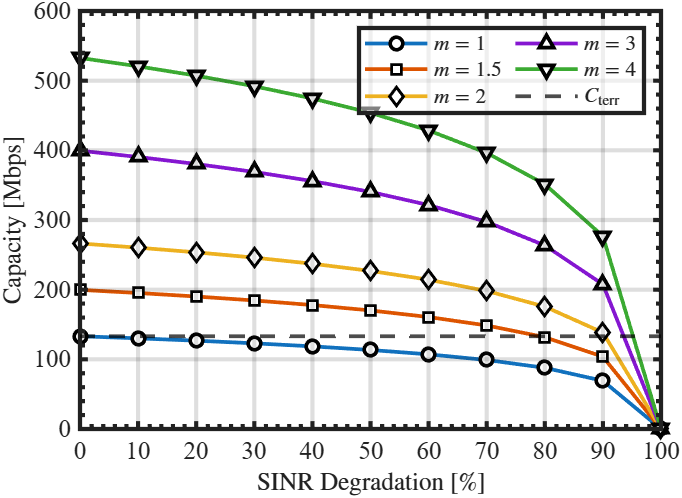}}
\caption{Capacity via \(C=mB\log_2(1+\gamma)\) versus SINR degradation relative to a terrestrial baseline (dashed); \(B=20\) MHz, baseline \(\gamma_0=20\) dB. Curves for \(m\in\{1,1.5,2,3,4\}\) (interpreted as bandwidth scaling). Larger \(m\) offsets SINR loss, providing additional capacity for HAPS offloading.}
%\vspace{-.5cm} 
\label{HAPS}
\end{figure}

\subsection{Network Power Consumption}
The power consumption of each BS in the network is calculated based on the energy-aware radio and network technologies (EARTH) power consumption model~\cite{6056691}. 
For the $i$-th BS, the power consumption at time slot $t$ is expressed as~\cite{7925662}

\begin{equation}
P_i(t)=
\begin{cases}
P_{\mathrm{o},i}+\eta_i\,\lambda_{i,t}\,P_{\mathrm{t},i}, & 0<\lambda_{i,t}\le 1,\\[2pt]
P_{\mathrm{s},i}, & \lambda_{i,t}=0,
\end{cases}
\label{eq1}
\end{equation}
where $P_{\text{o},i}$ represents the operational circuit power consumption, $\eta_i$ is the power amplifier efficiency, $\lambda_{i,t}$ is the load factor (i.e., the normalized traffic load), $P_{\text{t},i}$ is the transmit power, and $P_{\text{s},i}$ is the power consumption in sleep mode. 
The total instantaneous power consumption of vHetNet, denoted as $P_\text{T}$, is given by
%\vspace{-1mm}
\begin{equation}
P_\text{T}(t) = \sum_{i=1}^{b}P_i(t)=P_\text{H}(t)  + P_\text{M}(t)  + \sum\limits_{j = 1}^s {P_j(t) }, 
\label{eq2}
\end{equation}
where $P_\text{H}(t)$ and $P_\text{M}(t)$ denote the power consumption of HAPS-SMBS and MBS at any given moment, respectively, which are calculated based on the $(0 < \lambda _{i,t}  < 1)$ case in \eqref{eq1} as HAPS-SMBS and MBS are always active in our modeling. 
Meanwhile, $P_j(t)$ represents the power consumption of the $j$-th SBS and $s$ signifies the total number of SBSs within the network.

\subsection{Renewable Energy Model}\label{subsec:renewable_model}
We incorporate a subset of SBSs that harvest solar energy and store it in a local battery. Since storage operates on energy, we model all harvested/used/stored quantities on a per-time-slot basis. Let $\Delta t_h=\Delta t/60$ denote the slot duration in hours. For SBS $j$ in slot $t$, the SBS’s energy demand is $E_j(t)=P_j(t)\,\Delta t_h$, where $P_j(t)$ comes from~\eqref{eq1}. The harvested solar energy is
\begin{equation}
E_{H,j}(t)=
\begin{cases}
\zeta\,C_s\,\Delta t_h\,\alpha(t), & t_{\mathrm{p},1}\le t \le t_{p,2},\\[4pt]
0, & \text{otherwise},
\end{cases}
\label{eq:harvest_energy}
\end{equation}
where $\zeta$ is solar conversion efficiency, $C_s$ (kWh/h) is the solar capacity (maximum harvestable energy per hour under peak clear-sky conditions). The factor $\alpha(t)\in[0,1]$ captures time-varying solar availability (e.g., cloud cover, seasonal effects, and intraday irradiance variations), with $\alpha(t)\approx 1$ under clear-sky peak conditions and smaller values under reduced irradiance. The parameters $t_\mathrm{p,1}$ and $t_\mathrm{p,2}$ denote the start and end times of the peak solar period, respectively.

Let $S_j(t)$ be the battery state (stored energy) at the end of slot $t$, with capacity $S_{\max,j}$. 
The renewable energy actually used in slot $t$ is
\begin{equation}
E_{\mathrm{R},j}(t)=\min\big\{E_j(t),\,S_j(t-1)+E_{\mathrm{H},j}(t)\big\},
\label{eq:renew_use}
\end{equation}
and the corresponding grid energy draw is
\begin{equation}
E_{\mathrm{G},j}(t)=E_j(t)-E_{\mathrm{R},j}(t).
\label{eq:grid_energy}
\end{equation}
The storage dynamics follow
\begin{equation}
\begin{aligned}
S_j(t)=\min\Big\{S_{\max,j},\,S_j(t-1)+E_{\mathrm{H},j}(t)-E_{\mathrm{R},j}(t)\Big\},\\
\qquad 0\le S_j(t)\le S_{\max,j}.
\label{eq:storage_energy}
\end{aligned}
\end{equation}

With this formulation, reductions in solar availability directly reduce $E_{\mathrm{H},j}(t)$, and any unmet demand is automatically supplied by the grid via~\eqref{eq:grid_energy}, while the battery state adapts through~\eqref{eq:storage_energy}, which also enforces the storage bounds $0\le S_j(t)\le S_{\max,j}$ (saturation at $S_{\max,j}$ and possible depletion toward $0$). For non-solar SBSs, $E_{\mathrm{H},j}(t)=0$ and $S_{\max,j}=0$, which implies $E_{\mathrm{R},j}(t)=0$ and $E_{\mathrm{G},j}(t)=E_j(t)$.

\begin{figure*}[t]
\centerline{\includegraphics[width = .7\textwidth ]{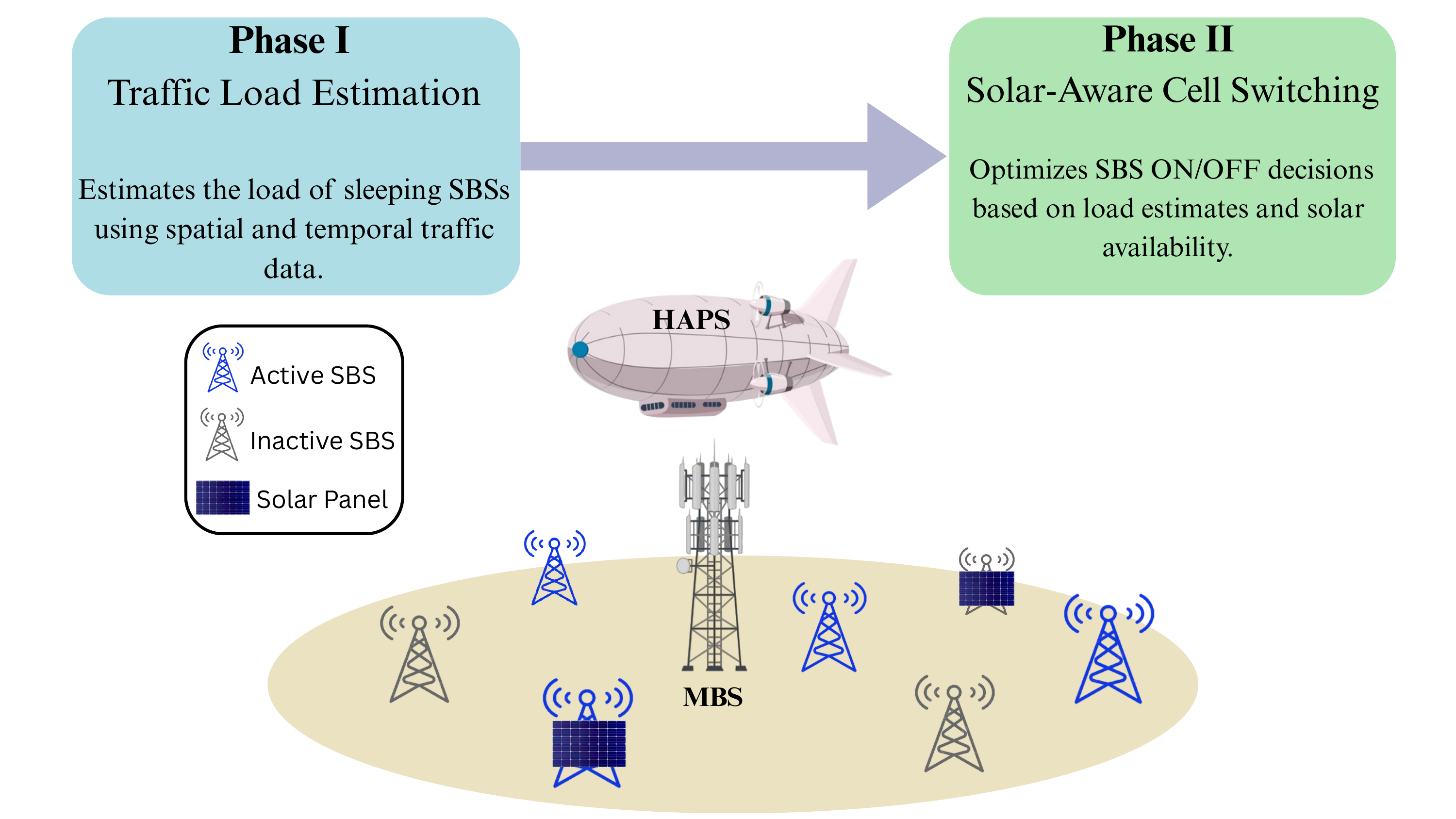}}
\caption{The proposed two-phase optimization framework for a vHetNet. Phase I estimates the traffic load of sleeping SBSs using spatial and temporal data. Phase II uses these estimates along with solar availability to optimize SBS ON/OFF switching, minimizing network power consumption. The system model shows the vHetNet topology with an MBS, multiple SBSs (some solar-powered), and a HAPS enabling vertical offloading.}
\label{fig-1}
%\vspace{-.5cm} 
\end{figure*}
\section{Two-Phase Framework for Power Consumption Optimization}
In this section, we introduce a two-phase framework designed to minimize power consumption in vHetNets by dynamically managing the operational states of SBSs. The framework incorporates both intelligent traffic load estimation and renewable energy awareness to support energy-efficient decision-making across the network.
\subsection{Two-Phase Framework Overview}

The first phase estimates the traffic load of sleeping SBSs using both spatial and temporal methods. These estimations are then utilized in the second phase, which determines the optimal switching states of SBSs to minimize overall power consumption.
A key feature of the second phase—representing a major contribution of this work—is the integration of solar-powered SBSs equipped with energy storage. By jointly considering the estimated traffic load and the availability of harvested solar energy, this phase enables more sustainable and adaptive cell switching decisions. This improves energy efficiency not only through intelligent traffic estimation but also by utilizing renewable energy sources to reduce dependency on the power grid.
Fig.~\ref{fig-1} illustrates the complete framework, combining the vHetNet architecture (including MBS, SBSs, and HAPS-SMBS) with the two-phase optimization process. While the phases are interdependent, they remain modular: Phase I provides the required traffic estimates, and Phase II builds on these to achieve renewable-aware, energy-efficient operation.

\subsection{Cell Switching Problem Formulation under Imperfect Load Estimation}\label{sec:problem}

Our objective is to minimize the total power consumption of the vHetNet, \(P_\text{T}\), by selecting the most energy-efficient switching state. Let
\(\boldsymbol{\Delta}_t = [\delta_{1,t},\delta_{2,t},\ldots,\delta_{s,t}]^\top\) denote the SBS ON/OFF vector at time slot \(t\), where \(\delta_{j,t}\in\{0,1\}\) is the state of the \(j\)-th SBS (0: OFF, 1: ON). The MBS and HAPS-SMBS are always active, i.e., \(\delta_{\text{M},t}=\delta_{\text{H},t}=1\) \(\forall t\).
If the \(j\)-th SBS is switched OFF at time slot \(t\) (i.e., \(\delta_{j,t-1}=1\) and \(\delta_{j,t}=0\)), its load becomes zero, \(\lambda_{j,t}=0\), and its previous-slot load is offloaded to the \emph{macro-tier BSs} \(k\in\{\text{M},\text{H}\}\) according to

\begin{equation}
\lambda _{k,t}  = \lambda _{k,(t - 1)}  + \phi _{j,k} \lambda _{j,(t - 1)}, \quad k=\{\text{M},~ \text{H}\}, 
\label{eq3}
\end{equation}
where \(\lambda_{k,t}\) is the load factor of the MBS (\(k=\text{M}\)) or HAPS-SMBS (\(k=\text{H}\)), and \(\phi_{j,k}=\frac{C_j}{C_k}\) is the relative capacity ratio between the \(j\)-th SBS and macro-tier \(k\). 

Conversely, if the \(j\)-th SBS is switched on at time slot \(t\) (i.e., \(\delta_{j,t-1}=0\) and \(\delta_{j,t}=1\)), it receives normalized load
\(\lambda_{j,t}=\frac{\tau_{j,t}}{C_j}\), where \(\tau_{j,t}\) is its (non-normalized) traffic load\footnote{Traffic load, $\tau$, represents the load of a BS, i.e., the amount of bandwidth occupied, while load factor, $\lambda$, denotes the normalized traffic load, which is computed by dividing the traffic load, $\tau$, to the capacity of the BS, $C$.
Since one can be computed from another, they can be used interchangeably.}, and the macro-tier offload accordingly
\begin{equation}
\lambda_{k,t}=\lambda_{k,t-1}-\phi_{j,k}\,\lambda_{j,t},\quad k\in\{\text{M},\text{H}\}.
\label{eq5}
\end{equation}

Accordingly, the generic cell switching optimization problem can be formulated as
 \begin{IEEEeqnarray*}{lcl}\label{eq:P1}
    &\underset{\mathbf{\Delta_t}}{\text{minimize}}\,\, & ~P_\text{T}(\mathbf{\boldsymbol{\Delta_t}} ) \,  \IEEEyesnumber \IEEEyessubnumber* \label{eq:P1_Obj}\\
    &\text{s.t.} & {\lambda _{\text{M},t}}{\le 1,} \;\\&&{\lambda _{\text{H},t}}{\le 1,}\label{eq:P1_const1}\\
        && {{\delta _{j,t} \in \{ 0,1\}, }} \quad {{\; j = 1,...,s}}, \label{eq:P1_const3}\\
    && {\eqref{eq3},\eqref{eq5}}. \label{eq:P1_const2}
\end{IEEEeqnarray*}
Problem~\eqref{eq:P1} is a mixed-integer linear program (MILP), since it optimizes over binary SBS switching variables $\delta_{j,t}\in\{0,1\}$ under linear constraints.
This generic cell switching optimization model will be customized with the renewable energy integration later in the paper.

Substituting \eqref{eq1} into \eqref{eq2} and rearranging, we obtain the closed-form expression below. 
In particular, for each SBS $j$, the contribution is written as
$\big(P_{\mathrm{o},j}+\eta_j\lambda_{j,t}P_{\mathrm{t},j}\big)\delta_{j,t}+P_{\mathrm{s},j}(1-\delta_{j,t})$,
so that $\delta_{j,t}=1$ selects the active-mode power and $\delta_{j,t}=0$ selects the sleep-mode power.
Since the MBS and HAPS-SMBS are always active (i.e., $\delta_{\mathrm{M},t}=\delta_{\mathrm{H},t}=1$ for all $t$), their expressions reduce to the active-mode term in \eqref{eq1}.

\begin{equation}
\begin{aligned}
P_\text{T} (\mathbf{\boldsymbol{\Delta_t}} )
 &={(P_{\text{o,H}}  + \eta _\text{H} \lambda _{\text{H},t} } P_{\text{t,H}} )+{(P_{\text{o,M}}  + \eta _\text{M} \lambda _\text{{M},t} } P_{\text{t,M}})\\
 &\hspace{-2em}\;\;\;\;\;\;\;+\Bigg(\sum\limits_{j = 1}^s {(P_{\text{o},j}  + \eta _j \lambda _{j,t}  P_{\text{t},j} )\delta _{j,t} +P_{\text{s},j}(1-\delta _{j,t})\Bigg)}.
\end{aligned}
\label{eq7}
\end{equation}
In the constraints \eqref{eq:P1_const1} and \eqref{eq:P1_const2}, $\lambda_\text{M}=[0,1]$ and $\lambda_\text{H}=[0,1]$ are defined as positive real numbers, $\mathbb{R}^+$, ensuring that the operational capacities of both the MBS and HAPS-SMBS are never exceeded, thereby upholding the QoS requirements. 
Importantly, the optimal state vector, ${\mathbf{\boldsymbol{\Delta}} _{\text{opt}}}$, minimizes ${P_\text{T}(\mathbf{\boldsymbol{\Delta}} )}$, which is a function of the load factors, ${\lambda }$, of SBSs. 
It should be noted that the optimization model in \eqref{eq:P1} presumes complete knowledge of all BSs' traffic loads, including those in sleep mode. This presumption, common in the literature~\cite{maryam,HETS2023P,ELAA2022JR,EOMK2017JR, Metin_VFA_CellSwitch,11352980}, poses a pivotal question: How can we accurately determine the traffic load for a sleeping SBS to decide its next state? The lack of precise knowledge about ${\lambda }$ for these SBSs necessitates a reliable estimation method. Addressing this estimation challenge is essential for refining our optimization strategy, which we will explore in the subsequent section.
\subsubsection{Power Consumption with the Erroneous Load Estimations}
% Proceeding to power consumption estimation, we address the imperative of approximating the traffic load for sleeping SBSs. 
Due to the unavailability of real-time traffic load data for the sleeping SBSs, we consider an estimated load, ${\hat \lambda}$, introducing a layer of uncertainty in the optimization process. Consequently, we redefine the network's total power consumption given in~\eqref{eq7} with the estimated load factors, $\hat \lambda$, as
\begin{equation}
\begin{aligned}
P_\text{est} (\mathbf{\boldsymbol{\Delta}} )
 &={(P_{\text{o,H}}  + \eta _\text{H} \hat \lambda _\text{H} } P_{\text{t,H}} )+{(P_{\text{o,M}}  + \eta _\text{M} \hat \lambda _\text{M} } P_{\text{t,M}})\\
 &\hspace{-2em}\;\;\;\;\;\;\;+\sum\limits_{j = 1}^s \Bigg({(P_{\text{o},j}  + \eta _j \hat \lambda _j  P_{\text{t},j} )\delta _j +P_{\text{s},j}(1-\delta _j)\Bigg)}.
\end{aligned}
\label{eq8}
\end{equation}
Depending on the accuracy of our estimation, this could be either perfect or imperfect estimation. In the case of perfect estimation, where ${\lambda _j}={\hat \lambda _j},~\forall j$, the estimated power consumption, $P_\text{est}$, accurately captures the ground truth, $P_\text{T}$. 
Achieving perfect estimation is the goal of our work and should be the goal of any other work employing a cell switching scheme. 
% However, this need for estimation has not already been introduced in the literature. To the best of our knowledge, this is the first time we are introducing and addressing this gap between perfect knowledge and reality assumption. 

In the case of imperfect estimation, where $\exists j \text{ such that } \lambda_j \ne \hat{\lambda}_j$, there would be a probability of error, ${p_\text{err}}$, and we can define an expected value for the total power consumption of the network as
\begin{equation}  
E[P] = P_\text{est}.p_\text{err}  + P_\text{T}.(1 - p_\text{err}).
\label{eq9}
\end{equation}
% Considering the need for accurate traffic load estimation, the following theorem investigates two scenarios that illustrate the potential consequences of estimation inaccuracies on the network's power consumption optimization.
\begin{theorem}
Errors in estimating the traffic load of SBSs in a vHetNet can lead to changes in the optimal state vector, thereby affecting the total power consumption of the network.
\end{theorem}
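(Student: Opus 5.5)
The theorem is an existence statement: estimation errors \emph{can} change the optimal state vector and thereby the power consumption. I would prove it constructively, by analysing the switching decision of a single sleeping SBS and exhibiting explicit error directions that flip it. Fix a time slot and a sleeping SBS $j$, and hold the states of all other SBSs fixed. Substituting the offloading rules \eqref{eq3} and \eqref{eq5} into \eqref{eq7} gives the change in total power from turning SBS $j$ ON rather than leaving it OFF:
\begin{equation*}
D_j(\lambda_j)=P_{\text{o},j}-P_{\text{s},j}+\lambda_j\Big(\eta_j P_{\text{t},j}-\sum_{k\in\{\text{M},\text{H}\}}\eta_k\phi_{j,k}P_{\text{t},k}\Big).
\end{equation*}
This is affine in $\lambda_j$. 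Turning the SBS ON is optimal exactly when $D_j(\lambda_j)<0$ and the macro-tier capacity constraints \eqref{eq:P1_const1} hold. The same expression with $\hat\lambda_j$ in place of $\lambda_j$ governs the decision made from \eqref{eq8}.

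Next I would define a threshold load $\lambda_j^\ast$. It is either the root of $D_j$, namely $(P_{\text{o},j}-P_{\text{s},j})/\big(\sum_k\eta_k\phi_{j,k}P_{\text{t},k}-\eta_jP_{\text{t},j}\big)$ when the bracket is negative, or the load at which keeping SBS $j$ OFF would push $\lambda_{\text{M},t}$ or $\lambda_{\text{H},t}$ past $1$. I would then treat two cases.

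\emph{Case 1 (underestimation):} $\hat\lambda_j<\lambda_j^\ast<\lambda_j$. The estimated problem keeps SBS $j$ OFF while the true optimum turns it ON. Either the true total power exceeds the true optimum by $|D_j(\lambda_j)|>0$, or the macro-tier load exceeds $1$, which is a QoS violation.

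\emph{Case 2 (overestimation):} $\lambda_j<\lambda_j^\ast<\hat\lambda_j$. SBS $j$ is unnecessarily turned ON, and the power increases by $D_j(\lambda_j)>0$.

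In both cases $\boldsymbol{\Delta}_{\text{opt}}(\hat\lambda)\neq\boldsymbol{\Delta}_{\text{opt}}(\lambda)$ and $P_\text{T}(\boldsymbol{\Delta}_{\text{opt}}(\hat\lambda))>P_\text{T}(\boldsymbol{\Delta}_{\text{opt}}(\lambda))$. Via \eqref{eq9}, $E[P]$ therefore exceeds the perfect-estimation value whenever $p_\text{err}>0$ and the error crosses the threshold.

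The main obstacle is showing that such a threshold actually lies in $(0,1)$ for admissible EARTH parameters, so that the cases are non-vacuous. I would handle this by giving a concrete numeric instance with realistic $P_{\text{o}}$, $P_{\text{s}}$, $\eta$, $P_{\text{t}}$ and $\phi_{j,k}$ values. Failing that, I would rely on the capacity-constraint threshold, which always exists when the macro tier is near saturation.
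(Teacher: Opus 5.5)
Your proof is correct and has the same skeleton as the paper's proof: an overestimation case, where a sleeping SBS is wrongly switched ON, and an underestimation case, where it wrongly stays OFF, separated by a load threshold. The difference is how you build the threshold.

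The paper posits an exogenous $\lambda_\text{th}$ and defines the conditional error probabilities $p_\text{err}^{\text{off}\to\text{on}}$ and $p_\text{err}^{\text{on}\to\text{off}}$. It then writes each expected power error as the gap between the SBS's active-mode power and its sleep-plus-offloaded cost, times that probability, keeping only the HAPS term $\eta_\text{H}\phi_{j,\text{H}}\lambda_jP_\text{t,H}$. It never shows that a decision-relevant threshold exists, so the flip is asserted rather than exhibited.

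You derive the threshold from the model itself, and this is legitimate. Combining \eqref{eq7} with \eqref{eq3} makes the objective separable: $P_\text{T}(\boldsymbol{\Delta}_t)=(\text{terms independent of }\boldsymbol{\Delta}_t)+\sum_j\delta_{j,t}D_j(\lambda_{j,t})$. So when the macro-tier constraints are slack, the joint optimum is decided coordinate-wise by the sign of the affine $D_j$. That justifies your ``hold the other SBSs fixed'' reduction and makes both the flip and its cost $|D_j(\lambda_j)|$ deterministic.

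Your route therefore actually establishes the ``can'' of the theorem, once you supply the promised numeric instance placing the root in $(0,1)$. The paper's route instead gives a probability-weighted penalty that links to \eqref{eq9} and to the decision-change rates reported in the simulations.

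Two small corrections:
\begin{itemize}
\item ON is optimal iff $D_j(\lambda_j)<0$ \emph{or} keeping SBS $j$ OFF would violate a macro-tier constraint, not ``and''. The effective threshold is therefore $\lambda_j^\ast=\min\{r_j,c_j\}$, where $r_j$ is the root of $D_j$ and $c_j$ is the capacity-violation load.
\item Your closing claim $P_\text{T}(\boldsymbol{\Delta}_{\text{opt}}(\hat\lambda))>P_\text{T}(\boldsymbol{\Delta}_{\text{opt}}(\lambda))$ fails in the underestimation case when $c_j<\lambda_j<r_j$. There $D_j(\lambda_j)>0$, so the erroneous OFF decision draws \emph{less} power than the true optimum but overloads the macro tier. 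The same caveat applies to your inequality for $E[P]$ via \eqref{eq9}. Since the theorem only says power is affected, this is harmless, but state the conclusion as ``total power changes, or a QoS constraint is violated'' rather than a strict increase in every case.
\end{itemize}
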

\begin{proof}
We analyze the impact of traffic load estimation errors on network power consumption through two primary scenarios related to the operational dynamics of SBSs.
\begin{itemize} 
\item \underline{Overestimation Scenario}: If the estimated load ${\hat \lambda _j}$ at time $t_1$ for a sleeping SBS is more than the actual load and a set threshold $\lambda _\text{th}$, it might mistakenly trigger the SBS's transition to an active state, leading to unnecessary power consumption. The probability of this erroneous transition, $p_\text{err} ^{\text{off} \to \text{on}}$, is defined as
\begin{equation}
p_\text{err} ^{\text{off} \to \text{on}}  = \text{Pr}\{ \hat \lambda _j  > \lambda _\text{th}\mid \lambda _j  \le \lambda _\text{th} \}.
\label{eq10}
\end{equation}

Note that $p_\text{err}^{\text{off}\to\text{on}} \in [0,1]$ is a conditional probability (a real-valued scalar) determined by the joint statistics of the true and estimated loads $(\lambda_j,\hat{\lambda}_j)$.
The expected value of the associated power consumption error, $P_\text{err}$, due to this transition is calculated as
\begin{equation} 
\begin{array}{l}
E[P_\text{err}]  = \big[(\eta _\text{H} \phi _{j,\text{H}} \lambda _j P_\text{t,H}  + P_{\text{s},j} ) - (P_{\text{o},j} \\ + \eta _j \hat \lambda _j P_{\text{t},j} )\big]\times p_\text{err} ^{\text{off} \to \text{on}}. 
%\.pr\{ \lambda _{i,t_1 }  \le \lambda _\text{th} \} .pr\{ \lambda _{i,t_0}  \le \lambda _\text{th} \} 
\end{array}
\label{eq11}
\end{equation}
\item \underline{Underestimation Scenario}: Conversely, if the actual load of an SBS at time $t_1$ is higher than the estimated and it mistakenly remains in sleep mode, this can lead to insufficient traffic management. 
The probability of such underestimation, $p_\text{err} ^{\text{on} \to \text{off}}$, is defined as
\begin{equation}
p_\text{err} ^{\text{on} \to \text{off}}  = \text{Pr}\{\hat \lambda _j  < \lambda _\text{th}\mid \lambda _j  \ge \lambda _\text{th} \}.
\label{eq12}
\end{equation}
Similarly, $p_\text{err}^{\text{on}\to\text{off}} \in [0,1]$ is a conditional probability capturing the likelihood of missed activation when the true load exceeds $\lambda_\text{th}$ but the estimate falls below it.
The expected value of the error in the total power consumption of the network, $P_\text{err}$, for this underestimation scenario is given by
\begin{equation} 
\begin{array}{l}
E[P_\text{err}]  = \big[(P_{\text{o},j}  + \eta _j \hat \lambda _j P_{\text{t},j} )-(\eta _\text{H} \phi _{j,\text{H}} \lambda _j P_\text{t,H} \\ + P_{\text{s},j} )\big]\times p_\text{err} ^{\text{on} \to \text{off}}.%\ .pr\{ \lambda _{i,t_1 }  \ge \lambda _\text{th} \} .pr\{ \lambda _{i,t_0}  \le \lambda _\text{th} \} 
\end{array}
\end{equation}
\label{eq13}
\end{itemize}
These scenarios illustrate the significant impact that accurate load estimation of sleeping SBSs has on the efficient optimization of network power consumption, affirming the core assertion of the Theorem.
\end{proof}
\subsection{Phase I: Multi-Dimensional Traffic Load Estimation-Spatial and Temporal Perspectives}\label{sec:method}

\begin{figure*}[t]
\centerline{\includegraphics[width = .7\textwidth ]{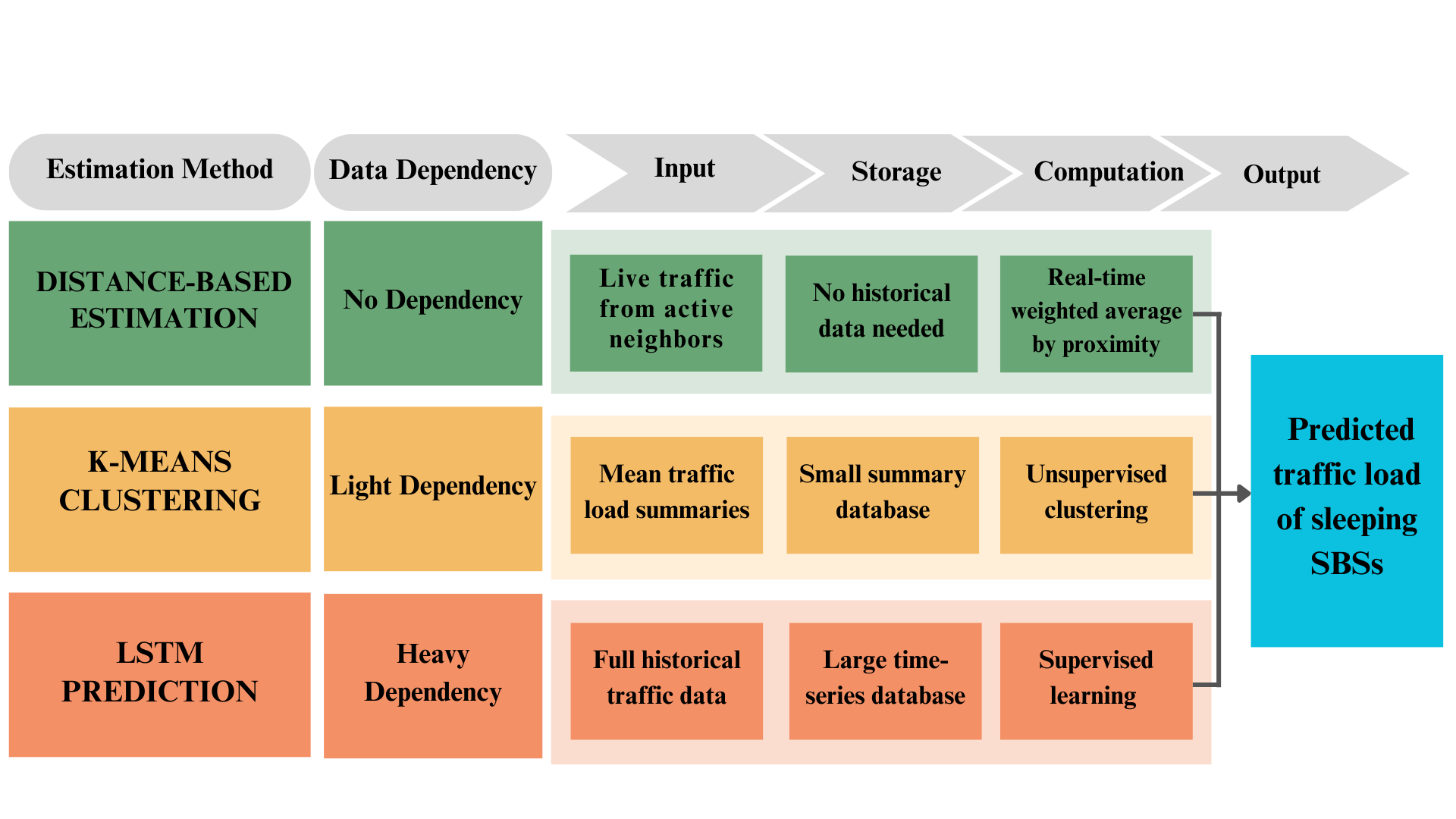}}
\caption{Taxonomy of Phase~I traffic load estimation methods, categorized by historical-data dependency (no/light/heavy) and the associated input, storage, and computational requirements. Any of the three can be used depending on data availability and deployment constraints.}
\label{fig-2}
%\vspace{-.5cm} 
\end{figure*}

Accurate estimation of the traffic load for sleeping SBSs is critical for effective and scalable cell switching in vHetNets. The choice of estimation method depends heavily on the availability of historical data and the trade-offs between estimation accuracy and implementation complexity.
In this work, we propose and classify traffic load estimation methods into three distinct categories based on their level of dependency on historical data, as illustrated in Fig.~\ref{fig-2}. Importantly, Fig.~\ref{fig-2} provides a categorization of candidate estimators rather than a selection rule: depending on the deployment’s data availability and complexity constraints, Phase~I can employ any one of these methods. In Section~\ref{sec:performance}, we evaluate all three estimators under the same dataset and report their comparative accuracy and impact on Phase~II switching.
This classification highlights the trade-off between data availability, computational complexity, and estimation accuracy, and serves as a practical guide for different deployment scenarios:

%%%%%%%%%%%%%%%%%%%
\begin{itemize}
    \item \textbf{No Historic Data Dependency – Distance-Based Method:} A spatial method that does not require any historical data. It estimates the traffic load of a sleeping SBS based on the real-time traffic loads of its active neighboring SBSs.
    
    \item \textbf{Light Historic Data Dependency – Clustering-Based Method:} Also spatial, this method utilizes spatial correlations and uses only summary statistics (e.g., average traffic load over time) from neighboring SBSs. It requires limited historical data and offers a balance between complexity and performance.
    
    \item \textbf{Heavy Historic Data Dependency – LSTM-Based Method:} A temporal approach that utilizes the full historical traffic sequence for each SBS to train a predictive model. It is designed for high-accuracy estimation when abundant historical data is available.
\end{itemize}

This categorization reflects a novel contribution of our work by bridging the gap between estimation accuracy and practical feasibility in real-world vHetNet deployments.

\subsubsection{Geographical Distance-Based Traffic Load Estimation}\label{dist}
This method estimates the traffic load of a sleeping SBS based on the proximity of its neighboring cells. By applying a distance-based weighting scheme, the influence of each neighboring cell is adjusted according to its geographical proximity, prioritizing nearby cells in the estimation process. Since this approach only uses current information from neighboring active cells and does not require historical traffic data, it belongs to the \textit{no dependency} category described earlier.

The estimated traffic load of the $j$-th sleeping SBS, $\hat\lambda_j$, is computed as

\begin{equation}
\hat \lambda _j  = {\sum\limits_{a = 1}^N {\lambda _a  \times w_{j,a} } }\Big/{{\sum\limits_{a = 1}^N {w_{j,a} } }},
\label{eq15}
\end{equation}
where $\lambda_a$ denotes the traffic load of the $a$-th neighboring SBS, and $w_{j,a}$ is the corresponding weighting factor. The number of neighboring SBSs included in the estimation is denoted by $N$. Note that usually $N\gg s$, as $N$ encompasses all the cells available for the estimation process, while $s$ is the number of SBSs within a single VHetNet with a single MC.

The weighting factor $w_{j,a}$ is defined as
\begin{equation} 
w_{j,a}  = \frac{{d_{\max } }}{{d_{j,a}^n }},\quad
n \in \mathbb{R}^+, 
\label{eq16}
\end{equation}
where ${d_{\max }}$ is the maximum distance between the sleeping SBS and its neighboring cells included in the estimation, and ${d_{j,a}}$ is the distance between the sleeping SBS ${j}$ and the neighboring SBS ${a}$.

\begin{algorithm}[H]
\caption{Two-Phase Cell Switching Framework} 
\DontPrintSemicolon
\SetAlgoLined

\KwIn{
$\tau_{j,t}$, $C_j$, $\Delta t$,
estimator choice $\mathcal{E}\in\{\mathrm{Dist},\,\mathrm{MLC},\,\mathrm{LSTM}\}$,
$S_{\mathrm{r}}\subseteq\mathcal{J}$ and renewable-model parameters,
power-model parameters in \eqref{eq1}--\eqref{eq2};
scenario/threshold parameter $\gamma$ (if applicable)}

\KwOut{
$\boldsymbol{\Delta}_t$, $P_{\mathrm{T}}(t)$}

\BlankLine
Initialize loads $\lambda_{j,t}=\tau_{j,t}/C_j$ and renewable storage states\;

\For{each time interval $\Delta t$ }{
    \textbf{Phase I (Sleeping-cell load estimation):}\;
    Construct $\hat{\lambda}_{j,t}$ for sleeping SBSs using the selected estimator $\mathcal{E}$:\;
    \uIf{$\mathcal{E}=\mathrm{Dist}$}{
        Obtain $\hat{\lambda}_{j,t}$ from \eqref{eq15}-\eqref{eq16} equations in Sec.~\ref{dist}\;
    }
    \uElseIf{$\mathcal{E}=\mathrm{MLC}$}{
        Obtain $\hat{\lambda}_{j,t}$ via multi-level clustering using Algorithm~\ref{MLCALG}\;
    }
   \uElseIf{$\mathcal{E}=\mathrm{LSTM}$}{ 
        Obtain $\hat{\lambda}_{j,t}$ by forward inference of the trained LSTM (Sec.~\ref{lstmapp})\;
    }

    \BlankLine
    \textbf{Phase II (Renewable-aware switching optimization):}\;
    Form the per-slot optimization (Problem~\eqref{eq:P1}, later extended with renewables) using measured/estimated loads
    (i.e., use $\lambda_{j,t}$ for active SBSs and $\hat{\lambda}_{j,t}$ for sleeping SBSs)\;
    Solve for the SBS state vector $\boldsymbol{\Delta}_t=[\delta_{1,t},\ldots,\delta_{s,t}]^\top$ using the selected switching policy
    (e.g., exhaustive search benchmark with the chosen renewable scenario/threshold $\gamma$)\;
    
    Update macro-tier loads using \eqref{eq3}--\eqref{eq5}\;
    Update renewable states for $j\in S_{\mathrm{r}}$ using \eqref{eq:harvest_energy}--\eqref{eq:storage_energy}\;
    Compute and record $P_{\mathrm{T}}(t)$ (e.g., via the closed form in \eqref{eq7} / its renewable-aware counterpart)\;
}
\label{alg:two_phase_sim_notation}
\end{algorithm}

\subsubsection{Clustering-Based Traffic Load Estimation}

This method relies on clustering SBSs according to their traffic patterns and estimating the traffic load of a sleeping SBS based on the average load of active SBSs within the same cluster. Since it requires only summary statistics of historical data—such as the mean traffic load over a given period—it falls under the \textit{light dependency} category in our data classification framework. This makes it suitable for scenarios where full-time series data is unavailable, yet basic statistical profiles of traffic are retained.

We employ the $k$-means algorithm, an unsupervised machine learning technique, to cluster the SBSs. The $k$-means algorithm partitions the SBSs into $G$ clusters (i.e., $k=G$) by iteratively minimizing the within-cluster sum of squared distances to the cluster centroids. Starting from an initial set of centroids, it alternates between (i) assigning each SBS to the nearest centroid (assignment step) and (ii) updating each centroid as the mean of the SBSs assigned to that cluster (update step), until convergence. This yields clusters whose members have similar traffic-profile features (in the Euclidean sense) and provides a simple, scalable way to form behaviorally similar SBS groups for load estimation. In practice, $k$-means can be sensitive to centroid initialization; thus, clustering can be repeated with multiple random initializations, retaining the run with the smallest SSE.

The number of clusters, a key hyperparameter in the $k$-means algorithm, is determined using the elbow method~\cite{article}, which evaluates different cluster counts by calculating the sum of squared errors (SSE) between data points and their assigned centroids. The SSE is given by

\begin{equation} 
SSE = \sum_{g=1}^{G}\sum_{x_m \in \kappa_g} \left\|x_m-\varkappa_g\right\|^2,
\label{eq18}
\end{equation}
where $G$ represents the optimal number of clusters, $\varkappa_g$ the centroid of each cluster $\kappa_g$, and $x_m$ each sample in $\kappa_g$. 
The optimal cluster number is identified at the point where the SSE curve forms an "elbow" before flattening.
\paragraph{Multi-level Clustering-Based Traffic Load Estimation}
The MLC approach involves repeated clustering of SBSs based on their traffic patterns to estimate the traffic load of offloaded SBSs. 
At each level, the traffic load of a sleeping SBS is estimated based on the average load of active SBSs in the same cluster. This iterative approach, as outlined in Algorithm~\ref{alg1}, progressively refines clustering with each layer, leading to more precise traffic load estimations for sleeping SBSs. 
We provide pseudocode for MLC since it explicitly iterates across clustering levels, whereas the distance-based estimator is given in closed form, and the LSTM estimator follows a standard train--infer pipeline described in the LSTM subsection.
Unlike the distance-based estimator that implicitly equates geographic proximity with correlation, MLC clusters SBSs by traffic-profile similarity (based on their historical/aggregate behavior), which can better capture sharp spatial heterogeneities (e.g., SBSs located on different sides of a traffic boundary, such as a dense downtown edge versus a nearby low-demand suburban/rural area, or hotspot cells near stadiums/events) where nearby SBSs may be weakly correlated.
Moreover, since MLC is learned from longer-term traffic profiles (light historical dependency), it is generally less sensitive to transient slot-level variations than purely instantaneous distance-based interpolation; in dynamic environments, clusters can be refreshed periodically using recent summary statistics, with fallback to alternative estimators if traffic drifts faster than the refresh interval.

\subsubsection{Temporal Traffic Load Prediction Using LSTM}\label{lstmapp}
To enhance the accuracy of traffic load estimation, we incorporate LSTM networks for temporal load prediction. As this approach relies on full historical traffic sequences to train a predictive model, it falls under the \textit{heavy dependency} category of our data dependency framework. LSTM networks, a variant of RNN, are particularly well-suited for capturing long-term dependencies in sequential data. 
%The LSTM cell structure, shown in Fig.~\ref{fig-3}, consists of three gates (forget, input, and output) and a cell state that acts as memory.
The LSTM cell structure consists of three gates (forget, input, and output) and a cell state that acts as memory.
The governing equations of an LSTM cell are

\begin{subequations}\label{eq:lstm}
\begin{align}
    f_t &= \sigma\!\big(W_f [h_{t-1}, x_t] + \beta_f\big), \label{eq:lstm_forget} \\
    \iota_t &= \sigma\!\big(W_{\iota} [h_{t-1}, x_t] + \beta_{\iota}\big), \label{eq:lstm_input} \\
    \tilde{c}_t &= \tanh\!\big(W_c [h_{t-1}, x_t] + \beta_c\big), \label{eq:lstm_candidate} \\
    c_t &= f_t \odot c_{t-1} + \iota_t \odot \tilde{c}_t, \label{eq:lstm_cell_state} \\
    o_t &= \sigma\!\big(W_o [h_{t-1}, x_t] + \beta_o\big), \label{eq:lstm_output} \\
    h_t &= o_t \odot \tanh(c_t), \label{eq:lstm_hidden}
\end{align}
\end{subequations}

where $f_t$, $\iota_t$, $o_t$, and $c_t$ denote the forget gate, input gate, output gate, and cell state, respectively. The weights $W$ and biases $\beta$ are trainable parameters, $\sigma$ is the sigmoid activation function, and $\odot$ denotes element-wise multiplication.

In these equations, $f_t$ represents the forget gate, which determines the information to discard from the previous cell state. The input gate, denoted as $\iota_t$, decides what new information to store in the cell state, while $\tilde{c}_t$ represents the candidate cell state that introduces this new information. The updated cell state, $c_t$, combines the retained memory from the forget gate with the new information from the input gate. The output gate, $o_t$, decides the final output of the cell, which is represented by the hidden state $h_t$. 
The parameters $W$ and $\beta$ are the trainable weights and biases of the model, respectively. The sigmoid activation function, $\sigma$, is used in the gates to limit their outputs between 0 and 1, effectively acting as a gating mechanism. 
\begin{algorithm}[h!]
\caption{Clustering-based traffic load estimation via Multi-Level Clustering (MLC) using $k$-means
}
\DontPrintSemicolon  % Removes semicolons
\SetAlgoLined  % Enables vertical lines
\KwData{Traffic loads of SBSs $\lambda_{a}$, maximum number of levels $L$}
\KwResult{Clustered SBSs with estimated traffic loads}

\SetKwFunction{FMain}{MLC\_k\_means}
\SetKwProg{Fn}{Procedure}{:}{}
\Fn{\FMain{$\lambda$, $L$}}{
    Determine the optimal number of clusters $G$ using the elbow method\;
    Initialize level index $l = 1$\;
    \While{$l \leq L$}{
        Perform $k$-means clustering on $\lambda$ to form $G$ clusters\;
        \For{cluster $\kappa_g$}{
            Calculate the mean traffic load $\mu_\text{m}$\;
            \For{sleeping SBS in $\kappa_g$}{
                Estimate the traffic load as $\mu_\text{m}$\;
            }
        }
        Update $\lambda$ with estimated ones for sleeping SBSs\;
        Increment the level count $l$ by 1\;
    }
    \Return The final clusters with estimated traffic loads\;
}
\label{MLCALG}
\end{algorithm}

\paragraph{Application of LSTM to Traffic Load Estimation}

In this study, LSTM networks are employed to predict the future traffic loads of SBSs using historical traffic data. The dataset, consisting of traffic loads collected over 30 days with 144 time slots per day, is preprocessed to remove outliers using z-score filtering with a threshold of 2.5. Subsequently, the dataset is shuffled to improve generalization during training.
The preprocessed data is divided into two subsets: 60\% of the data is used for training the model, while the remaining 40\% is reserved for testing. A sliding window mechanism is applied to the data, with a window size of 8 time steps, to create sequences of input-output pairs. Each sequence comprises 8 previous time steps as input and the next time step as the target output. The look-back window length and the number of LSTM units are the two key capacity hyperparameters of the predictor. We selected these values based on an empirical sensitivity analysis (Fig.~5), which evaluates multiple window/unit configurations and supports the final choice as a practical accuracy--complexity trade-off.
The LSTM model is configured with the parameters summarized in Table \ref{table:Phase1}. These include a learning rate of 0.001, one LSTM layer with 10 units, and a dense layer for output. The model is trained over 50 epochs using the mean absolute error (MAE) loss function and the Adam optimizer, with a batch size of 32. After training, the model is evaluated on the test dataset, and the predicted traffic loads are compared with the actual values. 
These parameters are chosen following standard time-series practices and preliminary tuning; in particular, the learning rate ensured stable convergence, the number of units balanced accuracy and overfitting, and the epoch/batch settings provided good generalization at reasonable cost.
From a deployment perspective, the main computational burden of the LSTM-based estimator arises during training, which can be performed offline using historical traffic traces. In contrast, real-time operation only requires lightweight forward inference to predict the next-slot load \cite{8585453}. To accommodate evolving traffic patterns in highly dynamic networks, the model can be retrained periodically at slower time scales (e.g., daily/weekly) while preserving low-latency inference in operation. Moreover, if a newly deployed SBS lacks sufficient historical data, the framework can temporarily fall back to spatial estimators (distance-based or MLC) until adequate data are collected, ensuring robust operation during network evolution.

\subsection{Phase II: Renewable Energy-Aware Cell Switching Optimization}

The second phase of our framework focuses on optimizing the switching states of SBSs to minimize total network power consumption, with an emphasis on integrating renewable energy contributions from solar-powered SBSs. This phase is interdependent on Phase I, as it relies on the traffic load estimates generated earlier to determine which SBSs should remain active and which can be safely (i.e., without causing QoS degradations) switched off to conserve energy. Since Phase~II operates on per-SBS estimated load factors from Phase~I, spatially heterogeneous traffic demand (e.g., hotspots versus low-demand areas) is naturally handled through these inputs and the associated capacity constraints.
In particular, SBSs serving persistent hotspots tend to remain active because switching them off would either violate macro-tier capacity constraints or yield limited net power reduction after offloading.
%%%%%%%%%%%%%%%%%%%%%
To further enhance sustainability and reduce reliance on grid power, we extend the network model by equipping a subset of SBSs with solar panels and batteries. These solar-capable SBSs can harvest and store solar energy for their operations, contributing to a more energy-efficient and environmentally sustainable vHetNet design—an important objective in the context of future 6G networks.

Since the renewable model in Section \ref{subsec:renewable_model} is expressed in per-slot energy, we use the average-power conversions

\begin{equation}
\bar P_{X,j}(t)\triangleq \frac{E_{X,j}(t)}{\Delta t_h},\qquad \Delta t_h=\frac{\Delta t}{60}\;\text{(hours)},
\label{eq:power_conv}
\end{equation}
for $X\in\{\mathrm{H},\mathrm{R},\mathrm{G}\}$. Note that $E_j(t)=P_j(t)\Delta t_h$ implies $\bar P_j(t)=P_j(t)$.

Let $S_{\mathrm{r}}\subseteq\mathcal{J}$ denote the solar-capable SBSs. Based on the extended model incorporating renewable energy, the power consumption optimization problem for Phase II is formulated as
\begin{IEEEeqnarray*}{lcl}\label{eq:P2}
    &\underset{\boldsymbol{\Delta}_t}{\text{minimize}}\,\, & ~P_\mathrm{T}(t)  \,  \IEEEyesnumber \IEEEyessubnumber* \label{eq:P2_Obj}\\ 
    &\text{s.t.} & {\lambda _{\mathrm{M},t}}{\le 1,} \; \\&&{\lambda _{\mathrm{H},t}}{\le 1,} \label{eq:P2_const1}\\
    && {{\delta _{j,t}  \in \{ 0,1\}, }}  {{\;\; j=1,2,...,s }}, \label{eq:P2_const2}\\
    &&  {\eqref{eq3},\eqref{eq5}} \label{eq:P2_const3},\\
    && P_{\mathrm{G},j}(t)=P_j(t),\quad \forall j \notin S_\mathrm{r}\label{eq:P2_const4}\\
    && {\eqref{eq:harvest_energy}-\eqref{eq:storage_energy}}, \quad \forall j \in S_\mathrm{r}. \label{eq:P2_const5}      
\end{IEEEeqnarray*}

The objective function \eqref{eq:P2_Obj} aims to minimize the total network power consumption based on the switching vector $\boldsymbol{\Delta_t}$. Constraints \eqref{eq:P2_const1}-\eqref{eq:P2_const3} ensure macro-tier feasibility and binary SBS decisions via the offloading relations. These feasibility constraints also make the framework robust to extreme traffic peaks: when traffic demand increases, switching additional SBSs to sleep would increase the offloaded load and may violate $\lambda_{\mathrm{M},t}\le 1$ and/or $\lambda_{\mathrm{H},t}\le 1$, and thus such decisions are automatically excluded. Consequently, the optimizer naturally keeps more SBSs active during peak intervals to preserve QoS.
For non-solar SBSs, \eqref{eq:P2_const4} enforces that their entire demand is supplied by the grid. For solar-capable SBSs, harvesting, renewable usage, grid draw, and storage follow the \emph{Renewable Energy Model} in Section \ref{subsec:renewable_model}; we apply \eqref{eq:harvest_energy}–\eqref{eq:storage_energy} and obtain the corresponding powers via \eqref{eq:power_conv}. 
In particular, the renewable usage is bounded by the available harvested-plus-stored energy via \eqref{eq:renew_use}, and any unmet demand is supplied by the grid through \eqref{eq:grid_energy}, ensuring feasibility even under renewable shortage.
Since this renewable model is defined per SBS and per time slot, heterogeneous solar availability across the service area and over time is directly reflected through $E_{\mathrm{H},j}(t)$ and the resulting storage evolution $S_j(t)$, while any renewable shortfall is automatically compensated via the grid-draw relation in \eqref{eq:grid_energy}. Therefore, when storage is depleted and harvesting is low (e.g., $S_j(t\!-\!1)\!\approx\!0$ and $E_{\mathrm{H},j}(t)\!\approx\!0$), the model yields $E_{\mathrm{R},j}(t)\!\to\!0$ and the SBS demand is naturally met by the grid through \eqref{eq:grid_energy}.
Problem~\eqref{eq:P2} can be cast as an MILP in $\delta_{j,t}\in\{0,1\}$, since the objective and constraints are linear/affine in $\boldsymbol{\Delta}_t$ and the renewable-energy relations \eqref{eq:harvest_energy}--\eqref{eq:storage_energy} are piecewise-linear and admit standard linear reformulations.

To address the challenges of minimizing network power consumption while integrating renewable energy, we propose a modified cell switching algorithm that dynamically accounts for the contributions of solar-capable SBSs.
To showcase the proof-of-concept (PoC) for our novel two-phase framework and demonstrate its effectiveness, we implement an optimal algorithm for the cell switching component to avoid diverting the focus from the framework itself to the algorithmic behavior. Accordingly, we employ an exhaustive search (ES) algorithm, which guarantees finding the optimal solution by evaluating all possible ON/OFF state combinations of SBSs. For a network with $s$ SBSs, the algorithm explores $2^s$ configurations, where ${\delta _j \in {0,1}}$ indicates whether the $j$-th SBS is OFF or ON. 
We emphasize that ES is used in this paper as an optimal benchmark to validate the proposed two-phase framework and quantify the performance limits of renewable-aware switching under accurate load estimation. In particular, using ES enables us to isolate the impact of Phase~I estimation accuracy and renewable-awareness in Phase~II without conflating the results with solver-dependent suboptimality, tuning, or convergence behavior. For a network with $s$ candidate SBS decision variables, ES evaluates all ON/OFF configurations and has exponential complexity $O(2^{s})$, which becomes prohibitive in large-scale deployments. Similar scalability challenges are common in large-scale energy-aware optimization systems (e.g.,~\cite{10976414}).
Importantly, the proposed two-phase framework does not depend on a specific optimization solver: Phase~I provides the required sleeping-cell load estimates, and Phase~II can be implemented using alternative low-complexity methods with scalable behavior (e.g., greedy sorting or learning-based policies). Such approaches are generally suboptimal compared to ES, but they offer significantly improved scalability for large-scale deployments (e.g.,~\cite{maryam,Metin_VFA_CellSwitch}). A detailed integration and benchmarking of these scalable solvers is beyond the scope of this paper and is left for future work.
In this paper, we adopt ES to avoid confounding the analysis with solver suboptimality and to isolate the incremental impact of renewable integration on switching decisions and energy savings.
Accordingly, the non-renewable baseline reported in Section~IV uses the same ES-based switching engine with the renewable model disabled (i.e., all SBS demand is supplied by the grid).

The proposed approach selectively includes or excludes solar-capable SBSs from the search space based on their stored energy levels, dynamically adjusting their contribution to the optimization process. Additionally, the algorithm calculates grid power consumption by accounting for the renewable energy contributions of solar-capable SBSs. Specific adjustments, such as threshold-based exclusion, are introduced to tailor the approach to varying operational conditions. To evaluate the proposed solution, we consider three distinct scenarios based on different strategies for utilizing renewable energy, which are described in detail in the following paragraphs.

\section{Performance Evaluation}\label{sec:performance}
This section presents the performance evaluation of our proposed two-phase framework. We assess (i) the accuracy and efficiency of the traffic load estimation methods in Phase I and (ii) the effectiveness of the renewable energy-aware cell switching strategies in Phase II. All simulations are conducted using the Milan dataset described in Section~\ref{sec:dataset}. Relevant simulation parameters are summarized in Table~\ref{table:Phase1} and Table~\ref{table:Phase2}.

\subsection{Simulation Setup and Dataset}\label{sec:dataset}
To assess power consumption as defined in~\eqref{eq1}, we require the load factor $\lambda_i$ for each SBS. 
Thus, we employ a real CDR data set from Telecom Italia~\cite{DVN/EGZHFV_2015} that captures user activity in Milan, partitioned into 10,000 grid cells of $235\times 235$ meters. This activity includes calls, texts, and Internet usage recorded every 10 minutes over November and December 2013. We consolidate these activities into a single measure of traffic load per grid. 
In our simulations, each SBS is mapped to a specific Milan grid cell; therefore, its geographic location and traffic load are fixed by the CDR grid measurements. The neighbor set used by spatial estimators is defined deterministically from the grid geometry (e.g., adjacent/nearby cells), which preserves the dataset’s inherent spatial traffic correlation. The only randomness is the selection of the SBS whose load is treated as unobserved (sleeping) in each trial; we repeat this process over $300$ Monte Carlo realizations and report averaged results.
The Milan CDR dataset provides terrestrial grid-level traffic measurements that are used to assign SBS traffic loads. The HAPS tier is not part of the dataset; it is modeled analytically in the system model and used in the evaluation through the offloading and capacity constraints.

\subsection{Phase I: Evaluation of Traffic Load Estimation Accuracy and Impact}

\begin{centering}
\begin{table}[t] 
\caption{Simulation Settings for Phase I Estimation Methods}
\centering % used for centering table 
\begin{tabular}{|c|c|} % centered columns 
  \hline % inserts single horizontal line
  \text{Parameter} & \text{Value} \\
  \hline
\multicolumn{2}{|c|}{\textbf{Spatial Estimation}} \\
\hline
Number of SBSs & 5000 \\

Number of time slots & 144 \\

Time slot duration & 10 m \\

Number of days & 30 \\

Number of iterations & 300 \\

Optimal $G$ using elbow method & 3 \\
\hline
\multicolumn{2}{|c|}{\textbf{Temporal Estimation}} \\
\hline
Learning rate & 0.001 \\

Prediction sequence length & 8 \\

Number of LSTM layeres & 1 \\

Loss Function & MAE \\

Optimizer & Adam \\

Number of Epochs & 50 \\

Batch Size & 32 \\
\hline

\end{tabular}
\label{table:Phase1} % is used to refer this table in the text 
\end{table}
\end{centering}

Fig.~\ref{fig-5} compares spatial estimation performance in terms of load-estimation error measured by MAPE. For the distance-based approach, MAPE is plotted versus the number of neighboring cells \(N\) for two exponents (\(n=3\) and \(n=5\)). Consistent with prior observations in~\cite{salamat}, increasing \(N\) generally raises error because more distant cells contribute weaker information, whereas larger \(n\) reduces error by emphasizing proximity in the weights. For the MLC approach, MAPE is plotted versus the number of clustering levels \(L\); error decreases with \(L\) as clusters become more homogeneous.

\begin{figure}[t]
\centerline{\includegraphics[width = 8.cm ]{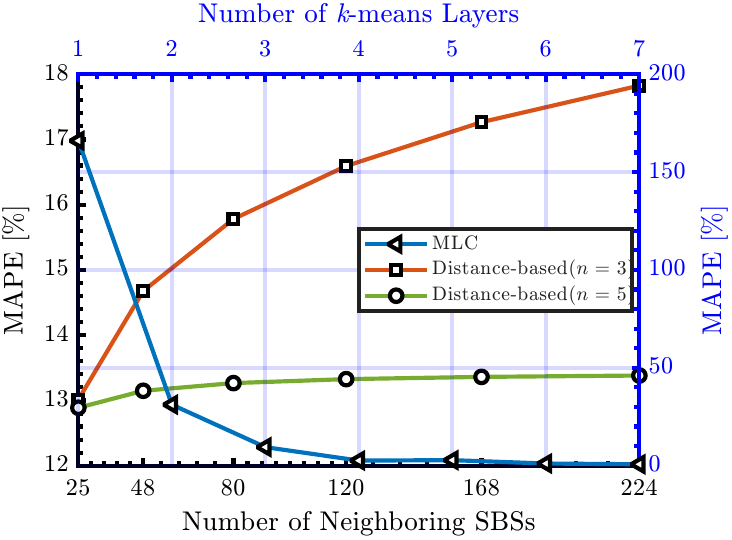}}
\caption{Estimation error comparison of spatial methods. Two $x$- and two $y$-axes are used: the blue axis corresponds to the clustering-based method, and the black axis corresponds to the distance-based method.}
\label{fig-5}
\end{figure}

Fig.~\ref{fig-8} presents MAPE results for the LSTM-based temporal estimation method under various configurations of window size (i.e., number of past time steps) and number of LSTM units.
As shown in the figure, increasing the window size improves prediction accuracy, particularly when paired with a sufficient number of LSTM units. For instance, with 5 LSTM units, MAPE decreases from 4.17\% at a window size of 4 to 1.22\% at a window size of 12.  
This trend becomes more consistent with 10 and 20 LSTM units, where MAPE reaches as low as 0.68\% and 0.64\%, respectively.
These results confirm that LSTM-based temporal modeling is effective in capturing sequential patterns in SBS traffic and highlight the importance of tuning key hyperparameters (e.g., input window size and LSTM capacity) to balance prediction accuracy and real-time feasibility. In particular, larger windows and higher-capacity LSTMs typically improve accuracy but increase inference workload, motivating lightweight configurations when strict real-time constraints apply.

\begin{figure}[t]
\centerline{\includegraphics[width = 8.cm ]{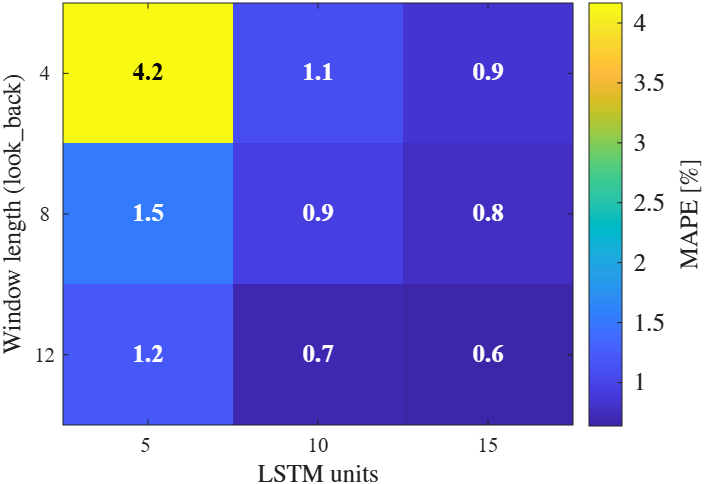}}
\caption{MAPE of LSTM-based traffic load estimation for different window sizes and LSTM units.}
\label{fig-8}
%\vspace{-.5cm} 
\end{figure}

Fig.~\ref{fig-6} plots total network power versus the number of SBSs, $s$, under perfect load knowledge and under estimated loads (by three methods configured as indicated in the legend). For all $s$, total power increases with network size. Across the range, the LSTM-based curve tracks the perfect-knowledge baseline most closely, the MLC curve lies slightly above it, and the distance-based curve shows the largest gap, especially at higher $s$. Each method is run with a single, representative hyperparameter setting selected from prior tuning to yield strong performance; the figure thus compares methods at reasonable operating points rather than presenting hyperparameter sweeps.

%%%%%%%%%%%%%%%%%%%%%%%%%

\begin{figure}[t]
\centerline{\includegraphics[width = 8.cm ]{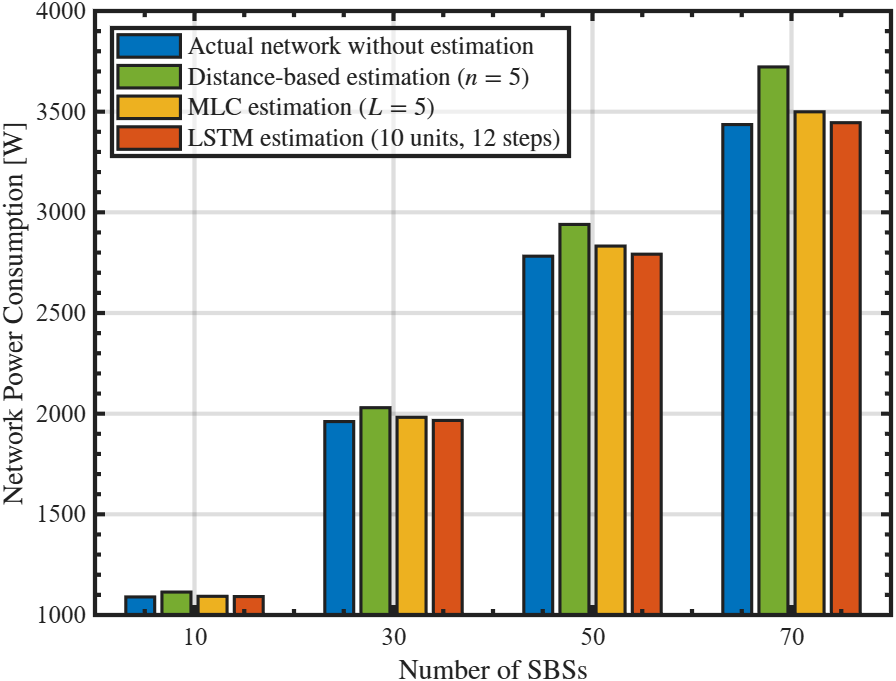}}
\caption{Total network power vs. number of SBSs under perfect load knowledge and under estimated loads.}
\label{fig-6}
%\vspace{-.5cm} 
\end{figure}

Fig.~\ref{fig-7} reports the \emph{decision-change rate}—the percentage of ON/OFF decisions that differ from the perfect-knowledge solution—versus the number of SBSs. Each estimator is run with a single, tuned configuration (as in Fig.~\ref{fig-6}). The rate increases with network size for all methods. The distance-based approach shows the largest divergence (e.g., from 5\% at $s=10$ to 52\% at $s=70$), MLC is intermediate (0–31\%), and LSTM yields the lowest changes (0–18\%). These trends reflect how even modest load-estimation errors can trigger more state flips as the combinatorial decision space grows, consistent with our analysis of over/underestimation effects.

\begin{figure}[t]
\centerline{\includegraphics[width = 8.cm ]{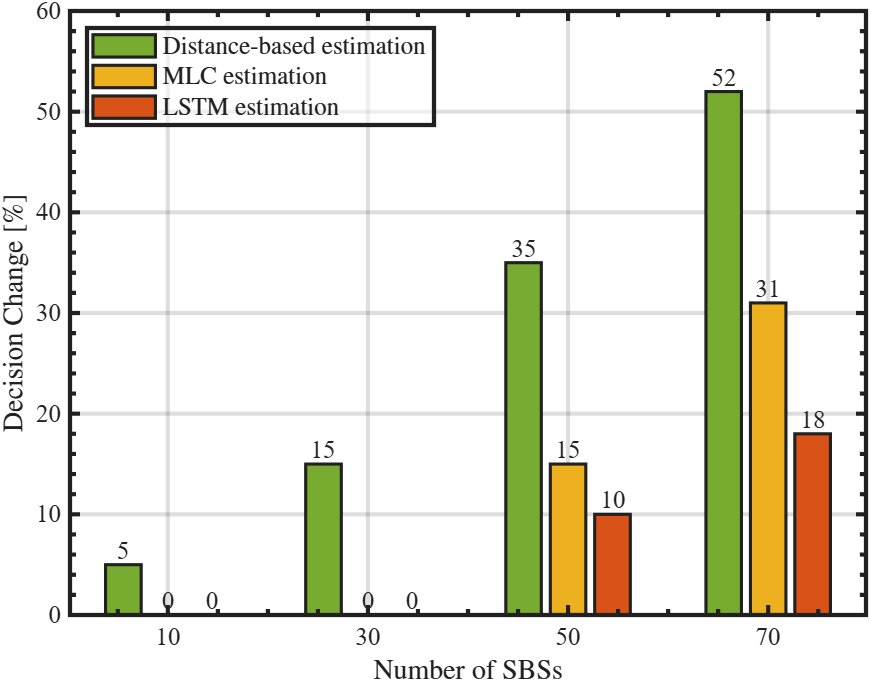}}
\caption{Decision-change rate between ground truth and estimated switching decisions vs. number of SBSs.}
\label{fig-7}
%\vspace{-.5cm} 
\end{figure}

It is important to note that while the results are based on the Milan dataset, we believe these findings offer generic insights applicable across various network scenarios, despite potential dataset-specific limitations. This underscores the broad relevance of our methodologies.
Moreover, Phase~I estimation is performed on a per-SBS basis and is inherently local; thus, the per-target runtime of the distance-based, MLC, and LSTM estimators does not directly scale with the total number of SBSs in the network (network growth mainly increases the number of SBS instances to be supported, not the per-SBS estimation cost).

\subsection{Phase II: Evaluation of Renewable Energy-Aware Cell Switching}

This part of the evaluation focuses on the performance of the proposed renewable energy-aware cell switching framework. The simulation analyzes how solar energy integration affects the overall network power consumption. Solar-capable SBSs are modeled with photovoltaic energy harvesting and battery storage, and the switching optimization accounts for both grid and renewable energy sources.
We examine three operational scenarios, representing a different strategy for incorporating solar-capable SBSs into the optimization process:

\subsubsection{\textbf{Scenario 1:} All SBSs Included in ES}
In this baseline scenario, all SBSs, including solar-capable SBSs, are included in the ES. The algorithm evaluates all possible ON/OFF states, considering only the grid power contributions of solar-capable SBSs. The stored solar energy in the batteries of solar-capable SBSs is not considered or utilized when calculating their power consumption.

\subsubsection{\textbf{Scenario 2:} Solar-Capable SBSs Excluded from ES}
In this scenario, solar-capable SBSs are excluded from ES. The optimization process considers only the grid power contributions of non-solar SBSs. After the optimization algorithm determines the optimal state for non-solar SBSs, the grid power contributions of solar-capable SBSs are added to the total network power consumption.
This approach reduces the complexity of the optimization process by reducing the search space while utilizing the stored solar energy in solar-capable SBSs.

%%%%%%%%%%%%%%%%%%%%%%
\subsubsection{\textbf{Scenario 3:} Threshold-Based Inclusion of Solar-Capable SBSs}
For each solar-capable SBS $j\in S_{\mathrm{r}}$, a state-of-charge ratio
$\rho_j(t)=S_j(t)/S_{\max,j}$ is defined. A tunable threshold $\gamma\in[0,1]$ governs whether $j$
is included in the search space:
\[
\text{include } j \iff \rho_j(t)\le \gamma
\quad\text{(otherwise exclude } j\text{).}
\]
Excluded solar SBSs are treated as fixed \emph{ON} and powered according to the
Renewable Energy Model in Section \ref{subsec:renewable_model} (i.e., first by
$E_{\mathrm{R},j}(t)$, with any deficit drawn from the grid via
$E_{\mathrm{G},j}(t)$). Solar SBSs participate in the ON/OFF optimization
together with non-solar SBSs. This threshold policy prioritizes using harvested
energy while limiting the search space. 

%%%%%%%%%%%%%%%%%
By definition, Scenarios~1 and 2 are special cases of Scenario~3: Scenario~1 corresponds to \(\gamma=1\) (ES over all SBSs), and Scenario~2 corresponds to \(\gamma=0\) (all solar SBSs forced ON, ES only over non-solar SBSs). 
From a computational standpoint, the dominant cost is governed by the size of the ON/OFF decision set. Scenario~2 excludes solar-capable SBSs from the search, and Scenario~3 may exclude a subset of them depending on the state-of-charge threshold, thereby reducing $n$ and significantly shrinking the search space; this search-space reduction principle can also be combined with other optimization engines beyond ES. Renewable-awareness does not change the combinatorial structure of the ON/OFF decision problem; it only adds lightweight per-configuration accounting for the renewable-versus-grid split and storage updates.

\begin{centering}
\begin{table}[t] 
\caption{Simulation Settings for Phase II Renewable Energy-Aware Switching}
\centering % used for centering table 
\begin{tabular}{|c|c|} % centered columns 
  \hline % inserts single horizontal line
Parameters&Value
\\ 
\hline
$P_{\mathrm{MAX} ,j}$&10 kW
\\
$C_s$&0.5 kWh
\\
$\zeta$&95\%
\\
Peak solar hours&08:00 to 18:00
\\
Number of time slots&144
\\
Number of days& 1
\\
\hline
    \end{tabular} \label{table:Phase2} % is used to refer this table in the text 
    %\vspace{-1.5em}
    \end{table}
    \end{centering}

 The total network power consumption as a function of the solar percentage for different numbers of SBSs (i.e., 10, 12, and 14) is illustrated in Fig.~\ref{PvsSolar}. The results compare Scenario 1 (with solar energy) to the network without solar integration. Both curves use the same ES-based switching engine; the baseline disables harvesting/storage so that all SBS energy demand is supplied by the grid, thereby isolating the net gain due solely to renewable awareness.
The network power consumption decreases with increasing solar percentages, demonstrating the effectiveness of renewable energy in reducing grid dependency. Larger networks, such as those with 14 SBSs, exhibit higher total power consumption overall due to their greater aggregate traffic demand and hardware operation. However, they also benefit more significantly from increased solar integration because the absolute amount of harvested renewable energy scales with the number of solar-capable SBSs. As more SBSs are equipped with solar panels, the relative contribution of renewables offsets a larger portion of the grid demand, amplifying the gains in larger configurations. This highlights both the scalability and the compounding benefits of renewable energy-aware solutions in dense deployments.
Networks without solar energy maintain constant power consumption regardless of the solar percentage, providing a reference for comparison. In contrast, networks with solar energy achieve consistently lower power consumption, underscoring the potential of solar energy to enhance network sustainability and efficiency.

%%%%%%%%%%%%%%%%%%%%%%%%
\begin{figure}[t]
\centerline{\includegraphics[width = 8.cm]{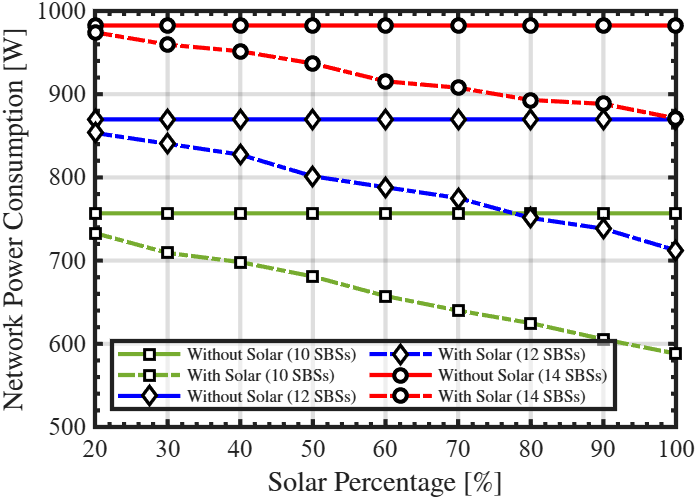}}
\caption{Network power consumption vs. solar percentage for 10, 12, and 14 SBSs, comparing Scenario 1 and the network without solar integration.}
\label{PvsSolar}
%\vspace{-.5cm} 
\end{figure}

Fig.~\ref{Pvstime} presents the hourly power consumption of the network over a 24-hour period for two configurations, with 8 and 14 SBSs, considering Scenario 1. The solar percentages analyzed are 
30\% and 60\%, compared against the network without solar energy integration.
During the peak solar hours, the networks with solar energy exhibit significantly lower power consumption compared to networks without solar energy. This reduction is more pronounced with a higher solar percentage (60\%) due to the increased availability of renewable energy. Outside of peak solar hours, when harvesting is low, and stored energy may be depleted, the solar-enabled curves converge to the no-solar baseline, meaning that the achievable NES is reduced rather than causing infeasibility.
The number of SBSs further impacts power consumption trends. Networks with 14 SBSs show consistently higher power consumption than those with 8 SBSs, as expected due to the greater traffic load. However, the relative reduction in power consumption during solar hours is more noticeable in larger networks because the absolute amount of harvested renewable energy also scales with the number of solar-capable SBSs. As more SBSs contribute to solar energy, a greater share of the grid demand is offset, making the renewable benefits appear more significant in larger configurations. This scalability effect is consistent with the trends observed earlier in Fig.~\ref{PvsSolar}, where larger networks benefited more strongly from increasing solar penetration.

\begin{figure}[t]
\centerline{\includegraphics[width = 8.cm]{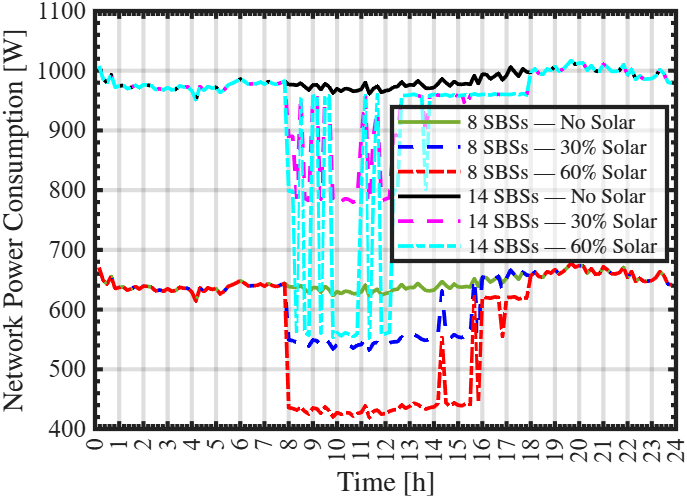}}
\caption{Hourly network power consumption over 24 hours, comparing different solar percentages and SBS configurations under Scenario 1 with a network without solar.}
\label{Pvstime}
%\vspace{-.5cm} 
\end{figure}
The average grid power versus the fraction of solar-capable SBSs is demonstrated in Fig.~\ref{Scenario} for the four solution scenarios. As solar penetration increases, all policies increase NES as solar penetration rises; the separation among curves stems from the different search spaces. Scenario~1 (\(\gamma\!=\!1\)) searches over all SBSs and thus achieves the highest NES (lower grid power) at every penetration level (the lower bound for this family). Scenario~2 (\(\gamma\!=\!0\)) forces all solar SBSs ON and runs ES only over the non-solar subset; it offers the smallest search space but the highest grid power (upper bound). The threshold policy (Scenario~3) with \(\gamma\in\{0.7,0.3\}\) interpolates between these extremes: a larger threshold (\(\gamma=0.7\)) includes more solar SBSs in ES, closely tracking Scenario~1 while already reducing the number of ES variables; a smaller threshold (\(\gamma=0.3\)) further reduces the search dimension at a visible cost in grid power. Let
$\mathcal{V}$ be the set of SBSs that remain decision variables in the ES (i.e., those not forced ON). Since ES complexity scales as 
\(2^{|\mathcal{V}|}\), lowering \(\gamma\) trades optimality for exponential savings in computation.

\begin{figure}[t]
\centerline{\includegraphics[width = 8.cm]{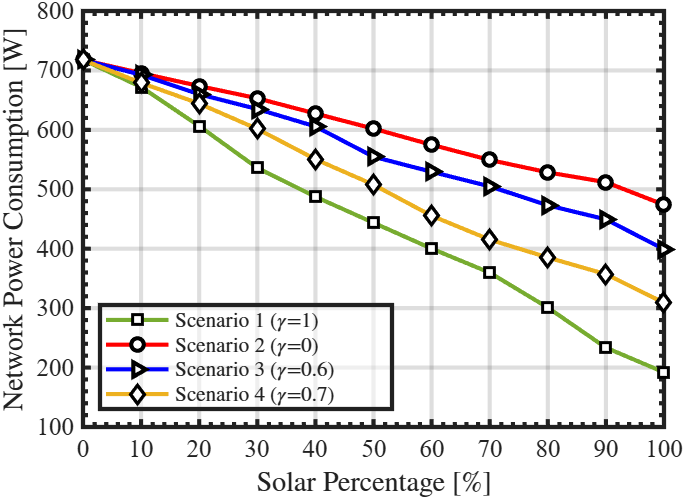}}
\caption{Total network power consumption as a function of solar percentage under three different cell switching scenarios.}

\label{Scenario}
%\vspace{-.5cm} 
\end{figure}

\section{Conclusion}\label{sec:conclusion}

This study addresses two critical challenges for optimizing power consumption in vHetNets: estimating traffic loads for SBSs in sleep mode and integrating renewable energy into cell switching strategies.
For traffic load estimation in Phase I, we developed mathematical frameworks to explain the behaviors of spatial interpolation methods and designed a portfolio of estimators with different levels of data dependency. This flexible design enables the framework to adapt to various deployment conditions, from data-scarce environments to data-rich networks, ensuring practical applicability. Validated with the real Milan dataset, these methods have proven effective in accurately estimating traffic loads, enabling the implementation of more efficient energy-saving strategies.
In Phase II, we proposed renewable energy–aware cell switching strategies to further enhance network sustainability. we proposed renewable-aware cell switching with three operational scenarios that capture hybrid grid-renewable policies. This scenario-based flexibility allows the framework to adapt its optimization strategy to different levels of solar penetration and storage availability. Simulation results showed significant reductions in power consumption, particularly in scenarios leveraging stored solar energy and selective SBS inclusion in the optimization process.
Overall, combining accurate sleeping-cell load estimation with renewable-aware switching yields substantial NES, providing a practical pathway to sustainable vHetNets.

\ifCLASSOPTIONcaptionsoff
  \newpage
\fi
\bibliographystyle{IEEEtran}
\small
\bibliography{IEEEabrv,Bibliography}

% Generated by IEEEtran.bst, version: 1.14 (2015/08/26)
\begin{thebibliography}{10}
\providecommand{\url}[1]{#1}
\csname url@samestyle\endcsname
\providecommand{\newblock}{\relax}
\providecommand{\bibinfo}[2]{#2}
\providecommand{\BIBentrySTDinterwordspacing}{\spaceskip=0pt\relax}
\providecommand{\BIBentryALTinterwordstretchfactor}{4}
\providecommand{\BIBentryALTinterwordspacing}{\spaceskip=\fontdimen2\font plus
\BIBentryALTinterwordstretchfactor\fontdimen3\font minus \fontdimen4\font\relax}
\providecommand{\BIBforeignlanguage}[2]{{%
\expandafter\ifx\csname l@#1\endcsname\relax
\typeout{** WARNING: IEEEtran.bst: No hyphenation pattern has been}%
\typeout{** loaded for the language `#1'. Using the pattern for}%
\typeout{** the default language instead.}%
\else
\language=\csname l@#1\endcsname
\fi
#2}}
\providecommand{\BIBdecl}{\relax}
\BIBdecl

\bibitem{bs_power}
M.~Feng, S.~Mao, and T.~Jiang, ``Base station on-off switching in 5{G} wireless networks: Approaches and challenges,'' \emph{IEEE Wireless Communications}, vol.~24, no.~4, pp. 46--54, Aug. 2017.

\bibitem{2222222}
C.-L. I., S.~Han, and S.~Bian, ``Energy-efficient {5G} for a greener future,'' \emph{Nature Electronics}, vol.~3, pp. 182--184, Apr. 2020.

\bibitem{itu_vision_june_23}
``{The ITU-R Framework for IMT-2030},'' ITU-R Working Party 5D, International Telecommunication Union (ITU), Tech. Rep., Jul. 2023.

\bibitem{Belkhir2018}
L.~Belkhir and A.~Elmeligi, ``Assessing {ICT} global emissions footprint: Trends to 2040 \& recommendations,'' \emph{Journal of Cleaner Production}, vol. 177, pp. 448--463, 2018.

\bibitem{3GPP}
{{3rd Generation Partnership Project (3GPP)}}, ``{Study on network energy savings for NR},'' {3GPP}, {3GPP Technical Report} {TR 38.864}, 2023.

\bibitem{6512843}
J.~B. Rao and A.~O. Fapojuwo, ``A survey of energy efficient resource management techniques for multicell cellular networks,'' \emph{IEEE Communications Surveys and Tutorials}, vol.~16, no.~1, pp. 154--180, May 2014.

\bibitem{ELAA2022JR}
A.~E. Amine, J.-P. Chaiban, H.~A.~H. Hassan, P.~Dini, L.~Nuaymi, and R.~Achkar, ``Energy optimization with multi-sleeping control in {5G} heterogeneous networks using reinforcement learning,'' \emph{IEEE Transactions on Network and Service Management}, vol.~19, no.~4, pp. 4310--4322, Mar. 2022.

\bibitem{EOMK2017JR}
M.~W. Kang and Y.~W. Chung, ``An efficient energy saving scheme for base stations in {5G} networks with separated data and control planes using particle swarm optimization,'' \emph{Energies}, vol.~10, no.~9, Sep. 2017.

\bibitem{Metin_VFA_CellSwitch}
M.~Ozturk, A.~I. Abubakar, J.~P.~B. Nadas, R.~N.~B. Rais, S.~Hussain, and M.~A. Imran, ``Energy optimization in ultra-dense radio access networks via traffic-aware cell switching,'' \emph{IEEE Transactions on Green Communications and Networking}, vol.~5, no.~2, pp. 832--845, Feb. 2021.

\bibitem{8735834}
J.~Ye and Y.-J.~A. Zhang, ``{DRAG}: Deep reinforcement learning based base station activation in heterogeneous networks,'' \emph{IEEE Transactions on Mobile Computing}, vol.~19, no.~9, pp. 2076--2087, Dec. 2020.

\bibitem{ACM}
M.~Marsan and M.~Meo, ``Energy efficient management of two cellular access networks,'' \emph{SIGMETRICS Performance Evaluation Review}, vol.~37, pp. 69--73, Mar. 2010.

\bibitem{ALOTAIBI2025104213}
J.~Alotaibi, O.~S. Oubbati, M.~Atiquzzaman, F.~Alromithy, and M.~R. Altimania, ``Optimizing disaster response with {UAV}-mounted {RIS} and {HAP}-enabled edge computing in {6G} networks,'' \emph{Journal of Network and Computer Applications}, vol. 241, Sep. 2025.

\bibitem{8024181}
M.~Feng, S.~Mao, and T.~Jiang, ``{BOOST}: Base station on-off switching strategy for green massive {MIMO} {HetNets},'' \emph{IEEE Transactions on Wireless Communications}, vol.~16, no.~11, pp. 7319--7332, Nov. 2017.

\bibitem{10061602}
X.~Tan, K.~Xiong, B.~Gao, P.~Fan, and K.~B. Letaief, ``Energy-efficient base station switching-off with guaranteed cooperative profit gain of mobile network operators,'' \emph{IEEE Transactions on Green Communications and Networking}, vol.~7, no.~3, pp. 1250--1266, Sep. 2023.

\bibitem{9528008}
J.~Lin, Y.~Chen, H.~Zheng, M.~Ding, P.~Cheng, and L.~Hanzo, ``A data-driven base station sleeping strategy based on traffic prediction,'' \emph{IEEE Transactions on Network Science and Engineering}, vol.~11, no.~6, pp. 5627--5643, Nov.-Dec. 2024.

\bibitem{10506912}
H.~Nashaat, N.~H. Mohammed, S.~M. Abdel-Mageid, and R.~Y. Rizk, ``Machine learning-based cellular traffic prediction using data reduction techniques,'' \emph{IEEE Access}, vol.~12, pp. 58\,927--58\,939, 2024.

\bibitem{eswa}
\BIBentryALTinterwordspacing
W.~Jiang, ``Cellular traffic prediction with machine learning: A survey,'' \emph{Expert Syst. Appl.}, vol. 201, no.~C, Sep. 2022. [Online]. Available: \url{https://doi.org/10.1016/j.eswa.2022.117163}
\BIBentrySTDinterwordspacing

\bibitem{jsan9040053}
\BIBentryALTinterwordspacing
B.~Mahdy, H.~Abbas, H.~S. Hassanein, A.~Noureldin, and H.~Abou-zeid, ``A clustering-driven approach to predict the traffic load of mobile networks for the analysis of base stations deployment,'' \emph{Journal of Sensor and Actuator Networks}, vol.~9, no.~4, Nov. 2020. [Online]. Available: \url{https://www.mdpi.com/2224-2708/9/4/53}
\BIBentrySTDinterwordspacing

\bibitem{8502042}
M.~D’Amours, A.~Girard, and B.~Sansò, ``Planning solar in energy-managed cellular networks,'' \emph{IEEE Access}, vol.~6, pp. 65\,212--65\,226, Oct. 2018.

\bibitem{6874568}
Y.-K. Chia, S.~Sun, and R.~Zhang, ``Energy cooperation in cellular networks with renewable powered base stations,'' \emph{IEEE Transactions on Wireless Communications}, vol.~13, no.~12, pp. 6996--7010, Aug. 2014.

\bibitem{9773096}
D.~Renga and M.~Meo, ``Can high altitude platform stations make {6G} sustainable?'' \emph{IEEE Communications Magazine}, vol.~60, no.~9, pp. 75--80, May 2022.

\bibitem{10938203}
\BIBentryALTinterwordspacing
M.~Ozturk, M.~Salamatmoghadasi, and H.~Yanikomeroglu, ``Integrating terrestrial and non-terrestrial networks for sustainable 6{G} operations: A latency-aware multi-tier cell-switching approach,'' \emph{IEEE Network (Early Access)}, pp. 1--1, 2025. [Online]. Available: \url{https://arxiv.org/abs/2508.10849}
\BIBentrySTDinterwordspacing

\bibitem{9380673}
G.~Karabulut~Kurt, M.~G. Khoshkholgh, S.~Alfattani, A.~Ibrahim, T.~S.~J. Darwish, M.~S. Alam, H.~Yanikomeroglu, and A.~Yongacoglu, ``A vision and framework for the high altitude platform station {(HAPS)} networks of the future,'' \emph{IEEE Communications Surveys \& Tutorials}, vol.~23, no.~2, pp. 729--779, 1st Quart., 2021.

\bibitem{itu_2000}
``{Preferred characteristics of systems in the fixed service using high altitude platforms operating in the bands 47.2-47.5 GHz and 47.9- 48.2 GHz,},'' Int. Telecommun. Union, Switzerland, ITU-Recommendation F.1500, Tech. Rep., Jan. 2000.

\bibitem{6056691}
G.~Auer, V.~Giannini, C.~Desset, I.~Godor, P.~Skillermark, M.~Olsson, M.~A. Imran, D.~Sabella, M.~J. Gonzalez, O.~Blume, and A.~Fehske, ``How much energy is needed to run a wireless network?'' \emph{IEEE Wireless Communications}, vol.~18, no.~5, pp. 40--49, Oct. 2011.

\bibitem{7925662}
H.~Wu, X.~Xu, Y.~Sun, and A.~Li, ``Energy efficient base station on/off with user association under {C/U} split,'' \emph{2017 IEEE Wireless Communications and Networking Conference}, pp. 1--6, May 2017.

\bibitem{maryam}
M.~Salamatmoghadasi, A.~Mehrabian, and H.~Yanikomeroglu, ``Energy sustainability in dense radio access networks via high altitude platform stations,'' \emph{IEEE Networking Letters}, vol.~6, no.~1, pp. 21--25, Mar. 2024.

\bibitem{HETS2023P}
T.~Song, D.~Lopez, M.~Meo, N.~Piovesan, and D.~Renga, ``High altitude platform stations: the new network energy efficiency enabler in the 6{G} era,'' \emph{2024 IEEE Wireless Communications and Networking Conference (WCNC)}, pp. 1--6, Jul. 2024.

\bibitem{11352980}
M.~Salamatmoghadasi, A.~Mehrabian, H.~Yanikomeroglu, and G.~Kaddoum, ``Sustainable vertical heterogeneous networks: A cell switching approach with high altitude platform station,'' \emph{IEEE Transactions on Green Communications and Networking}, vol.~10, pp. 1951--1966, Jan. 2026.

\bibitem{article}
H.~Zhao, ``Research on improvement and parallelization of k-means clustering algorithm,'' \emph{IEEE 3rd International Conference on Frontiers Technology of Information and Computer}, pp. 57--61, Dec. 2021.

\bibitem{8585453}
O.~S. Oubbati, N.~Chaib, A.~Lakas, and S.~Bitam, ``On-demand routing for urban {VANET}s using cooperating {UAV}s,'' \emph{2018 International Conference on Smart Communications in Network Technologies (SaCoNeT)}, pp. 108--113, Oct. 2018.

\bibitem{10976414}
O.~S. Oubbati, J.~Alotaibi, F.~Alromithy, M.~Atiquzzaman, and M.~R. Altimania, ``A {UAV}-{UGV} cooperative system: Patrolling and energy management for urban monitoring,'' \emph{IEEE Transactions on Vehicular Technology}, vol.~74, no.~9, pp. 13\,521--13\,536, Apr. 2025.

\bibitem{DVN/EGZHFV_2015}
\BIBentryALTinterwordspacing
T.~Italia, ``{Telecommunications - SMS, Call, Internet - MI},'' 2015. [Online]. Available: \url{https://doi.org/10.7910/DVN/EGZHFV}
\BIBentrySTDinterwordspacing

\bibitem{salamat}
M.~Salamatmoghadasi, M.~Ozturk, and H.~Yanikomeroglu, ``Enhancing sustainability in {HAPS}-assisted {6G} networks: Load estimation aware cell switching,'' \emph{IEEE International Symposium on Personal, Indoor and Mobile Radio Communications ({PIMRC})}, pp. 1--6, Sep. 2025.

\end{thebibliography}

\vfill
\end{document}